\documentclass[a4paper,twocolumn,11pt,accepted=2023-10-30]{quantumarticle}
\pdfoutput=1
\usepackage[utf8]{inputenc}
\usepackage[english]{babel}
\usepackage[T1]{fontenc}
\usepackage{amsmath}
\usepackage{hyperref}

\usepackage{tikz}
\usepackage{lipsum}

\usepackage[numbers,sort&compress]{natbib}

\usepackage{graphicx,framed} 
\usepackage{dcolumn} 
\usepackage{bm} 
\usepackage{braket}
\usepackage{color}
\usepackage[normalem]{ulem} 
\usepackage{qcircuit}       

\usepackage{amsthm,amssymb}

\newtheorem{theorem}{Theorem}
\usepackage{comment}

\definecolor{lightblue}{RGB}{73,151,208}
\definecolor{crimson}{RGB}{140,41,53}

\hypersetup{
    colorlinks,
    linkcolor={crimson},
    citecolor={lightblue},
    urlcolor={lightblue}
}

\newtheorem{corollary}{Corollary}[theorem]

\newcommand{\hl}[1]{\textcolor{black}{#1}}

\usepackage{mathtools}

\begin{document}

\title{Minimum Trotterization Formulas for a Time-Dependent Hamiltonian}

\author{Tatsuhiko N. Ikeda}
\affiliation{RIKEN Center for Quantum Computing, Wako, Saitama 351-0198, Japan}
\affiliation{Department of Physics, Boston University, Boston, Massachusetts 02215, USA}
\affiliation{Institute for Solid State Physics, University of Tokyo, Kashiwa, Chiba 277-8581, Japan}
\orcid{0000-0001-6135-1726}

\author{Asir Abrar}
\affiliation{Physics and Informatics Laboratory, NTT Research, Inc.,940 Stewart Dr., Sunnyvale, California, 94085, USA}
\author{Isaac L. Chuang}
\affiliation{Department of Physics, Department of Electrical Engineering and Computer Science, and Co-Design Center
for Quantum Advantage, Massachusetts Institute of Technology, Cambridge, Massachusetts 02139, USA
}

\author{Sho Sugiura}
\affiliation{Physics and Informatics Laboratory, NTT Research, Inc.,940 Stewart Dr., Sunnyvale, California, 94085, USA}
\affiliation{Laboratory for Nuclear Science, Massachusetts Institute of Technology, Cambridge, 02139, MA, USA}

\maketitle

\begin{abstract}
When a time propagator $e^{\delta t A}$ for duration $\delta t$ consists of two noncommuting parts $A=X+Y$, Trotterization approximately decomposes the propagator into a product of exponentials of $X$ and $Y$. 
Various Trotterization formulas have been utilized in quantum and classical computers, but much less is known for the Trotterization with the time-dependent generator $A(t)$.
Here, for $A(t)$ given by the sum of two operators $X$ and $Y$ with time-dependent coefficients $A(t) = x(t) X + y(t) Y$, we develop a systematic approach to derive high-order Trotterization formulas with minimum possible exponentials.
In particular, we obtain fourth-order and sixth-order Trotterization formulas involving seven and fifteen exponentials, respectively, which are no more than those for time-independent generators.
We also construct another fourth-order formula consisting of nine exponentials having a smaller error coefficient.
Finally, we numerically benchmark the fourth-order formulas in a Hamiltonian simulation for a quantum Ising chain, showing that the 9-exponential formula accompanies smaller errors per local quantum gate than the well-known Suzuki formula.
\end{abstract}

\section{Introduction}
Let us consider an initial value problem
\begin{align}
&\frac{d}{dt} S(t,t') = A(t) S(t,t'), \label{eq: diff eq}
\\
&S(t',t')=I,
\end{align} 
where $A(t), S(t,t') \in \mathbb{C}^{d\times d}$ and $I$ is the identity operator in $\mathbb{C}^{d\times d}$.
Equation~\eqref{eq: diff eq} frequently appears in physics.
For instance, the Schr\"{o}dinger equation corresponds to the case where $A(t)$ is anti-Hermitian (i.e., $A(t)=-iH(t)$ with a Hermitian Hamiltonian $H(t)$). The Lindblad equation, which is the time evolution equation for Markovian open systems, corresponds to the case where $A(t)$ is the Liouvillian, which is not Hermitian nor anti-Hermitian in general. Therefore, solving Eq.~\eqref{eq: diff eq} is one of the fundamental problems in physics.

The formal solution of Eq.~\eqref{eq: diff eq} is given by the following time-ordered exponential,
\begin{align}\label{eq:Udef}
S(t,t')=\mathcal{T}\exp\left( \int_{t'}^t A(s)ds \right).
\end{align}
When the operator $A(t)$ is time-independent, i.e., $A(t)=A$ $\forall t$, Eq.~\eqref{eq:Udef} reduces to $S(t,t')=\exp((t-t')A)$.
The time-ordered exponential is defined as an infinite sum of multiple integrals and is not easy to calculate since $A(t)$ consists of non-commuting operators in general. Thus, a decomposition that expresses \eqref{eq:Udef} as a product of exponentials of simpler operators is important in applications such as a digital quantum simulation.

For the time-independent case of $A=X+Y$, \hl{decompositions of $S(\delta t,0)=e^{(X+Y)\delta t}$ into a product of $e^{\alpha X}$ and $e^{\beta Y}$ have been extensively studied.} 
The lowest-order decomposition $S(\delta t,0)= e^{\delta t X}e^{\delta t Y}+O(\delta t^2)$ was first obtained by Lie and its applicability was extended by Trotter~\cite{Trotter1959} and Kato~\cite{Kato1974}.
Also, various higher-order decompositions have been proposed both in classical mechanics \cite{Forest1990} and in quantum mechanics \cite{Hatano2005}.
Here, Trotterization means the form of
\begin{align}
  T(\vec{a},\vec{b})=
  \begin{cases}
      e^{a_1 X}e^{b_1 Y}\cdots e^{a_q X}e^{b_q Y}\\
      e^{a_1 X}e^{b_1 Y}\cdots e^{a_q X}e^{b_q Y}e^{a_{q+1}X}.
  \end{cases}
  \label{eq: product ansatz}
\end{align}
with real numbers $a_j$'s and $b_j$'s, \hl{where $2q$ or $2q+1$ denotes the number of exponentials in each formula}. 
When \hl{$a_j$'s and $b_j$'s are chosen so that} the error of the approximation is of the order of $\delta t^{p+1}$, i.e., 
\begin{align}
    S\left(\mu+\frac{\delta t}{2},\mu-\frac{\delta t}{2}\right)=T(\vec{a},\vec{b}) + O(\delta t^{p+1}), 
\end{align}
we call $T(\vec{a},\vec{b})$ the $p$-th order Trotterization formula of $S(\mu+\frac{\delta t}{2},\mu-\frac{\delta t}{2})$.
\hl{Unlike other methods~\cite{Low2018}, Trotterization allows us to implement Schr\"{o}dinger equation in digital quantum computers without exploiting ancillary qubits and has been widely used in noisy intermediate-scale quantum computers.}

However, much less is known about Trotterization for the time-dependent case~\cite{Hatano2005,Huyghebaert1990,Poulin2011,Low2018}, \hl{which includes more variety of nonequilibrium quantum phenomena.} 
For simplicity, we focus, throughout this paper, on the following type of a time-dependent generator~\cite{An2021}
\begin{align}
    A(t) = x(t) X + y(t) Y,
    \label{eq: time dependence}
\end{align}
where $x(t)$ and $y(t)$ are smooth real functions and $X, Y \in \mathbb{C}^{d\times d}$.
\hl{
Two different approaches are known for the Trotterization formula of a time-dependent operator. The first is to take $x(t)$ and $y(t)$ at fine-tuned discrete points $t_k\ (n=1, \cdots, N)$, and the algorithm is constructed using $\exp(a_k X\delta t)$ and $\exp(b_k y(t_k)Y)$~\cite{Hatano2005,Wiebe2010}.
This approach systematically generates higher-order formulas based on the formulas for time-independent cases, but its efficiency, especially the number of exponentials, is not necessarily optimal.
The second is an approach based on a different approximation initiated by Huyghebaert and de Readt~\cite{Huyghebaert1990,Poulin2011}, where the time-ordered exponential~\eqref{eq:Udef} is approximated by a product of normal exponentials such as $\exp\left(\int_{t'}^t x(s)X ds \right)$ and $\exp\left(\int_{t'}^t y(s)Y ds \right)$.
However, its generalization to higher orders is not straightforward and has not been systematically developed.
}

\hl{Here we focus on the number of exponentials in the formula as its usefulness}. The number is directly related to the calculation cost for, e.g., a digital quantum computer and a matrix-product-state-(MPS-)based classical simulation. Therefore, we look for the Trotter formula of the minimum number of exponentials. 
For \hl{$k=2$}, the well-known formula so-called midpoint rule (see Eq.~\eqref{eq:midpoint} below) is a three-exponential formula and is the minimum formula. This is because it is readily shown that there does not exist a 2-exponential second-order formula in general.
For $k=4$, by contrast, there are some known formulas.
For example, Suzuki~\cite{Suzuki1993,Hatano2005,Wiebe2010} gave a 15-exponential formula for the more general case $X(t)+Y(t)$, and this reduces to an 11-exponential one for the present class of problem given in Eq.~\eqref{eq:Udef}.
However, it has remained unsolved whether there exists another fourth-order formula having fewer exponentials, and the minimum number of exponentials has not been found yet, to the best of the authors' knowledge.

In this paper, \hl{we derive minimum Trotterization formulas for time-dependent} $A(t)$ in the form of Eq.~\eqref{eq: time dependence}.
\hl{In particular, we obtain the fourth-order (sixth-order) explicit formulas consisting of 7 (15) exponentials and show that 7 (15) exponentials are the minimum among all the fourth-order (sixth-order) formulas in the form of Eq.~\eqref{eq: product ansatz}.
Thus we call our formulas the minimum fourth-order and sixth-order Trotterization (MFT and MST) formulas.}
We also find a fourth-order 9-exponential formula, whose error is less than that of the MFT.
We numerically compare the error between the MFT, 9-exponential, and Suzuki's fourth-order formulas in 1-qubit and many-spin models.
We confirm that the MFT (9-exponential) formula has a slightly larger (smaller) error per exponential than Suzuki's does.
In this way, the 9-exponential formula offers an efficient way for simulating dynamics on, e.g., a digital quantum computer and an MPS-based classical simulator.

\section{The formulas and errors}
We begin with the Magnus expansion~\cite{Blanes2009}
\begin{align}\label{eq:Magnus}
S\left(\mu+\frac{\delta t}{2},\mu-\frac{\delta t}{2}\right) = \exp\left(\sum_{n=1}^\infty \Omega_n\right).
\end{align}
As is well known, $\Omega_n$ is given by a multiple integral over $n$ variables $t_1,\dots,t_n$ on the domain $\mu-\delta t/2\le t_n \le \cdots \le t_1 \le \mu+\delta t/2$, and the integrand consists of nested commutators of $A(t_1),\dots,A(t_n)$  (see Appendix~\ref{app:Magnus} for their concrete forms).
Therefore, $\Omega_1=O(\delta t)$ and $\Omega_n=O(\delta t^{n+1})$ ($n\ge2$) hold in general.
Also, our parameterization ensures that $\Omega_n$ does not involve even-order terms in $\delta t$ when Taylor-expanded.
The Magnus expansion, truncated at some order, has been utilized for classical and quantum simulations (see, e.g., Refs.~\cite{Iserles1999,Sornborger2018}).
Here we use it to derive Trotterization formulas, i.e., decompositions into $X$ and $Y$ as in Eq.~\eqref{eq: product ansatz}.

For $A(t)$ given in Eq.~\eqref{eq: time dependence}, the Magnus expansion reduces to the continuous BCH formula~\cite{Blanes2009} giving
\begin{align}
\Omega_1 &= \beta_1(\mu,\delta t)X + \beta_2(\mu,\delta t)Y,\\
\Omega_2 &= \beta_{12}(\mu,\delta t)[X,Y],\\
\Omega_3 &= \sum_{i=1}^2 \beta_{i12}(\mu,\delta t)[Z_i,[X,Y]],\\
\Omega_4 &= \sum_{i=1}^2 \sum_{j=1}^2 \beta_{ij12}(\mu,\delta t)[Z_i,[Z_j,[X,Y]]],
\end{align}
and so on, where we introduced useful notations $Z_1=X$ and $Z_2=Y$, and $\beta$'s are real numbers, \hl{such as 
$\beta_1=\int_{\mu-\delta t/2}^{\mu+\delta t/2}dt x(t)$ and 
$\beta_2=\int_{\mu-\delta t/2}^{\mu+\delta t/2}dt y(t)$
(see Appendix~\ref{app:Magnus} for the others).}
For brevity, we shall omit $(\mu,\delta t)$ from $\beta$'s in the following.

\hl{
We remark that Eq.~\eqref{eq:Magnus} becomes a trivial identity in the time-independent case, where $x(t)=x$ and $y(t)=y$.
In this case, the exact evolution does not need the time-ordered exponential, giving $S(\mu+\delta t/2,\mu-\delta t/2)=\exp[(xX+yY)\delta t]$.
Also, $\Omega_n=0$ holds for $n>1$ since $\beta_{12}=\beta_{112}=\dots=0$, and the right-hand side of Eq.~\eqref{eq:Magnus} equals $\exp[(xX+yY)\delta t]$, where we used $\beta_1=x\delta t$ and $\beta_2=y\delta t$.
}

Interestingly, $\Omega_n$'s are actually smaller than the obvious estimate $O(\delta t^n)$.
For instance,
$\Omega_2\propto \beta_{12}=\frac{1}{2}\int_{\mu-\delta t/2}^{\mu+\delta t/2}dt_2\int_{\mu-\delta t/2}^{t_2}dt_1 [ y(t_2)x(t_1) - x(t_2)y(t_1) ]=O(\delta t^3)$.
To prove this, one can, e.g., Taylor-expand $x(t_j)$ and $y(t_j)$ at $t_j=\mu$, finding that the coefficient of $\delta t^2$ vanishes.
Furthermore, one can show $\Omega_3 = O(\delta t^5)$ and $\Omega_4=O(\delta t^5)$ (see Appendix~\ref{app:Magnus} for proof).
This rapid increase of orders in $\Omega_n$ enables us to construct efficient Trotterization formulas as shown below.

Using the lowest-order continuous BCH formula, we can reproduce the well-known second-order Trotterization formula $S(\mu+\frac{\delta t}{2},\mu-\frac{\delta t}{2})=T_2(\mu+\frac{\delta t}{2},\mu-\frac{\delta t}{2})+O(\delta t^3)$ with
\begin{align}\label{eq:midpoint}
    T_2\left(\mu+\frac{\delta t}{2},\mu-\frac{\delta t}{2}\right) \equiv e^{x(\mu)X\delta t/2}e^{y(\mu)Y\delta t}e^{x(\mu)X\delta t/2}
\end{align}
also known as the mid-point rule.
To do this, we notice $S(\mu+\frac{\delta t}{2},\mu-\frac{\delta t}{2})=e^{\Omega_1}+O(\delta t^3)=e^{\beta_1 X+\beta_2 Y}+O(\delta t^3)$ and use the minimum second-order formula for a time-independent problem $e^{\beta_1X+\beta_2Y}=e^{\beta_1X/2}e^{\beta_2Y}e^{\beta_1X/2}+O(\delta t^3)$ together with $\beta_1=x(\mu)+O(\delta t^3)$ and that for $\beta_2$.

To obtain a fourth-order formula, we invoke the following key result.
\begin{theorem}\label{thm:main}
Suppose $u\equiv \beta_{12}/\beta_2=O(\delta t^2)$. Then it follows
\begin{align}
&S\left(\mu+\frac{\delta t}{2},\mu-\frac{\delta t}{2}\right) =e^{uX}e^{\beta_1X+\beta_2Y}e^{-uX}+\Upsilon_5;\label{eq:uX}\\
&\Upsilon_5  = \Omega_3+\Omega_4-\frac{u^2}{2}\beta_2[X,[X,Y]]+O(\delta t^7).\label{eq:error}
\end{align}
\end{theorem}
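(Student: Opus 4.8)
The plan is to compare the two sides at the level of exponents (matrix logarithms) rather than the exponentials themselves. Concretely: expand the left-hand side with the Magnus series~\eqref{eq:Magnus} and truncate it using the size estimates for the $\Omega_n$; rewrite the conjugated product $e^{uX}e^{\beta_1X+\beta_2Y}e^{-uX}$ as a single exponential via the adjoint action; match the two exponents to the order at which~\eqref{eq:error} is claimed; and finally convert the residual difference of exponents back into the additive remainder $\Upsilon_5$.

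For the left-hand side I would first record magnitudes: $\Omega_2=\beta_{12}[X,Y]=O(\delta t^3)$, the improved estimates $\Omega_3=O(\delta t^5)$ and $\Omega_4=O(\delta t^5)$ from Appendix~\ref{app:Magnus}, and $\Omega_n=O(\delta t^{n+1})$ in general; combined with the absence of even powers of $\delta t$, this pushes $\Omega_5$, and hence every $\Omega_{n\ge5}$, to $O(\delta t^7)$. Therefore
\[
S\!\left(\mu+\tfrac{\delta t}{2},\mu-\tfrac{\delta t}{2}\right)=\exp\!\big(\Omega_1+\Omega_2+\Omega_3+\Omega_4+O(\delta t^7)\big).
\]

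For the conjugated product I would use the adjoint-action identity $e^{uX}Ze^{-uX}=\sum_{k\ge0}\tfrac{u^k}{k!}(\mathrm{ad}_X)^kZ$ with $Z=\Omega_1=\beta_1X+\beta_2Y$ and $(\mathrm{ad}_X)\Omega_1=\beta_2[X,Y]$, giving
\[
e^{uX}e^{\beta_1X+\beta_2Y}e^{-uX}=\exp\!\Big(\Omega_1+u\beta_2[X,Y]+\tfrac{u^2}{2}\beta_2[X,[X,Y]]+\tfrac{u^3}{6}\beta_2\big[X,[X,[X,Y]]\big]+\cdots\Big).
\]
The key algebraic observation is that $u\beta_2=\beta_{12}$ by the definition of $u$, so the $O(u)$ term is exactly $\Omega_2$; and since $u=O(\delta t^2)$ and $\beta_2=O(\delta t)$, every term with three or more nested commutators is $O(\delta t^7)$. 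Hence the conjugated product equals $\exp\!\big(\Omega_1+\Omega_2+\tfrac{u^2}{2}\beta_2[X,[X,Y]]+O(\delta t^7)\big)$, whose exponent differs from that of $S$ by $\Delta:=\Omega_3+\Omega_4-\tfrac{u^2}{2}\beta_2[X,[X,Y]]+O(\delta t^7)=O(\delta t^5)$.

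It then remains to exponentiate this difference. Writing $S=e^{Q+\Delta}$ and the conjugated product as $e^{Q}$ with $Q=O(\delta t)$, the Duhamel formula $e^{Q+\Delta}-e^{Q}=\int_0^1e^{sQ}\Delta\,e^{(1-s)Q}\,ds+O(\Delta^2)$ together with $e^{\pm sQ}=I+O(\delta t)$ shows that $S-e^{uX}e^{\beta_1X+\beta_2Y}e^{-uX}$ equals $\Delta$ up to corrections each dressed by at least one extra power of $Q$; organizing the result into the displayed terms of~\eqref{eq:error} plus a remainder is then straightforward. I expect the main obstacle to be establishing that this remainder is genuinely of the stated order: the dressing factors $e^{\pm sQ}$ cannot reintroduce anything at order $\delta t^5$ or below (since $\Delta=O(\delta t^5)$ and $Q=O(\delta t)$), so the $O(\delta t^5)$ part of the error is exactly $\Omega_3+\Omega_4-\tfrac{u^2}{2}\beta_2[X,[X,Y]]$ as claimed, but pinning down the higher corrections requires carefully tracking the nested-commutator structures and exploiting both the hypothesis $u=O(\delta t^2)$ and the reflection antisymmetry of the two sides (replacing $\delta t$ by $-\delta t$ inverts $S$ and the conjugated product alike). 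The only other nontrivial ingredient, the estimates $\Omega_3,\Omega_4=O(\delta t^5)$, is already in hand.
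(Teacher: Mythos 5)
Your proposal is correct and follows essentially the same route as the paper: apply the adjoint action (BCH) to rewrite $e^{uX}e^{\beta_1X+\beta_2Y}e^{-uX}$, identify $u\beta_2[X,Y]=\Omega_2$, compare with the Magnus expansion of $S$, and pull the residual $O(\delta t^5)$ exponent difference out as the additive remainder $\Upsilon_5$. The paper performs that last extraction implicitly by writing the small terms additively outside the exponential, whereas you make it explicit via Duhamel's formula and correctly flag that sharpening the remainder from $O(\delta t^6)$ to $O(\delta t^7)$ rests on the odd-in-$\delta t$ structure of the expansion -- the same point the paper relies on.
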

\begin{proof}
The BCH formula leads to $e^{uX}e^{\beta_1 X+\beta_2 Y}e^{-uX}=e^{\beta_1X+\beta_2Y+u[X,\beta_2Y]}+\frac{u^2}{2}[X,[X,\beta_2 Y]]+O(\delta t^7)=e^{\beta_1X+\beta_2Y+\beta_{12}[X,Y]}+\frac{u^2}{2}\beta_2[X,[X,Y]]+O(\delta t^7)$, where we used $u=O(\delta t^2)$ ensured by our assumption.
On the other hand, the continuous BCH formula gives
\begin{align}
S\left(\mu+\frac{\delta t}{2},\mu-\frac{\delta t}{2}\right)&=e^{\beta_1X+\beta_2Y+\beta_{12}[X,Y]}\notag\\
&\qquad+\Omega_3+\Omega_4+O(\delta t^7) \label{eq:Legendre}
\end{align}
Comparing these, we obtain Eq.~\eqref{eq:uX}.
\end{proof}

\hl{Three remarks are in order regarding Theorem~\ref{thm:main}. The first remark is}
on the role of the assumption $\beta_{12}/\beta_2=O(\delta t^2)$. Although $\beta_2=O(\delta t)$ always holds true, $\beta_2$ can happen to be as small as 
$\beta_2=\Theta(\delta t^3)$. Here, $\beta=\Theta(\delta t^n)$ stands for $0<\lim_{\delta t\to +0}(\beta/\delta t^n)<\infty$, and we used the fact that $\beta_2$ consists of odd-order terms of $\delta t$ (see Appendix~\ref{app:poly}). In such a case, if $\beta_{12}=\Theta(\delta t^3)$, $u=\Theta( 1)$ and Eq.~\eqref{eq:uX} does not hold.
However, even when $\beta_{12}/\beta_1 \neq O(\delta t^2)$, if $\beta_{12}/\beta_2=O(\delta t^2)$, we can still use Theorem~\ref{thm:main} by interchanging the roles of $x(t)X$ and $y(t)Y$.
Recall that our target $S(\mu+\frac{\delta t}{2},\mu-\frac{\delta t}{2})$ is invariant under the interchange of $x(t)X$ and $y(t)Y$.
If both $\beta_{12}/\beta_i$ $(i=1,2)$ are not $O(\delta t^2)$, Theorem~\ref{thm:main} can be used by setting $u=0$ as follows.
This case implies $\beta_1=\beta_2=O(\delta t^3)$ and $\Omega_n=O(\delta t^5)$ for $n\ge2$, and $S(\mu+\delta t/2,\mu-\delta t/2)=e^{\beta_1X+\beta_2Y}+O(\delta t^5)$.

\hl{The second remark is} that Theorem~\ref{thm:main} can also be derived using  (a non-Hermitian version of) the Schrieffer-Wolff Transformation (SWT) \cite{Schrieffer66, Bravyi11}.
When we regard $\Omega_1$ and $\Omega_2$ as the unperturbed and perturbation terms, respectively, the SWT aims to eliminate $\Omega_2$ within $O(\delta t^5)$ errors by an appropriate similarity transformation ($S$ is not necessarily Hermitian below)
\begin{align}\label{eq: SWT}
    e^{S}(\Omega_1+\Omega_2)e^{-S} =  \Omega_1  +O(\delta t^5),
\end{align}
whose exponential form reads
\begin{align}\label{eq: SWTexp}
    e^{S}e^{\Omega_1+\Omega_2}e^{-S} =  e^{\Omega_1}  +O(\delta t^5).
\end{align}
Assuming that $S=O(\delta t^2)$, we obtain the condition for $S$ satisfying Eq.~\eqref{eq: SWT} as
\begin{align}\label{eq: Scomm}
    [S,\Omega_1] + \Omega_2 = 0
\end{align}
when $O(\delta t^5)$ errors are neglected.
Recalling that $\Omega_1 = \beta_1 X+ \beta_2 Y$ and $\Omega_2 = \beta_{12}[X,Y]$, we obtain a solution $S=(\beta_{12}/\beta_2)X=uX$ for Eq.~\eqref{eq: Scomm}, for which Eq.~\eqref{eq: SWTexp} coincides with Eq.~\eqref{eq:uX}.
We remark that the transformation is unitary if $X$ is Hermitian.

\hl{
The final remark is that Theorem~\ref{thm:main} becomes trivial for the time-independent case, where we have $u=0$ and $\Upsilon_5$.
This follows from $\Omega_n=0$ for $n>1$, as remarked above.
In this sense, $\Upsilon_5$ is the characteristic of time-dependent cases, which we will call the time-dependent component of the Trotter error.
}

\hl{Having given those three remarks, we now return to the Trotterization formula's derivation.}
Even though we are considering a time-dependent problem,
Eq.~\eqref{eq:uX} tells us that the Trotterization is achieved by Trotterizing $e^{\beta_1X+\beta_2Y}$, which is done by invoking a conventional formula for time-independent problems.
For example, we can apply the Forest-Ruth-Suzuki formula~\cite{Forest1990,Suzuki1990}
\begin{align}
e^{\beta_1 X+\beta_2Y}&=e^{\frac{s \beta_1}{2} X}e^{s \beta_2 Y}e^{\frac{1-s}{2}\beta_1 X}e^{(1-2s)\beta_2 Y}\notag\\
&\qquad\times e^{\frac{1-s}{2}\beta_1 X}e^{s \beta_2 Y}e^{\frac{s \beta_1}{2} X}
+\Gamma_5,\label{eq:Forest-Ruth}
\end{align}
where where $s=(2-2^{1/3})^{-1}$ and $\Gamma_5=O(\delta t^5)$ is the error~\cite{Omelyan2002,Ostmeyer2023}, whose properties are generally investigated~\cite{Childs2021}.
Substituting Eq.~\eqref{eq:Forest-Ruth} into Eq.~\eqref{eq:uX}, we obtain the following corollary giving the MFT.

\begin{corollary}[Minimum 7-exponential formula]\label{coro:MFT}
It follows
\begin{align}\label{eq:MFT}
&S\left(\mu+\frac{\delta t}{2},\mu-\frac{\delta t}{2}\right)=T_{7,4}+\Lambda_5;\notag\\
&T_{7,4}\equiv e^{(\frac{s \beta_1}{2}+u) X}e^{s \beta_2 Y}e^{\frac{1-s}{2}\beta_1 X}e^{(1-2s)\beta_2 Y} \notag\\
&\qquad\qquad \times e^{\frac{1-s}{2}\beta_1 X}e^{s \beta_2 Y}e^{(\frac{s \beta_1}{2}-u) X};\notag\\
&\Lambda_5 = \Gamma_5+\Upsilon_5 .
\end{align}
\end{corollary}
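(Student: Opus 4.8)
The plan is to substitute the Forest--Ruth--Suzuki identity~\eqref{eq:Forest-Ruth} into the conjugated decomposition~\eqref{eq:uX} of Theorem~\ref{thm:main} and then fuse the two pairs of adjacent $X$-exponentials that sit at the two ends of the resulting product. First I would note that the corollary inherits the hypothesis $u=\beta_{12}/\beta_2=O(\delta t^2)$ of Theorem~\ref{thm:main}, which is the generic situation since $\beta_{12}=O(\delta t^3)$ while $\beta_2=\Theta(\delta t)$; the exceptional cases are covered by the first remark following the theorem. Granting this, Theorem~\ref{thm:main} already gives $S(\mu+\tfrac{\delta t}{2},\mu-\tfrac{\delta t}{2})=e^{uX}e^{\beta_1X+\beta_2Y}e^{-uX}+\Upsilon_5$, so only the middle factor needs to be rewritten.

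Next I would write~\eqref{eq:Forest-Ruth} as $e^{\beta_1X+\beta_2Y}=P+\Gamma_5$, where $P$ is the explicit seven-exponential word and $\Gamma_5=O(\delta t^5)$, giving $S=e^{uX}Pe^{-uX}+e^{uX}\Gamma_5e^{-uX}+\Upsilon_5$. For the error term, a single application of the BCH expansion yields $e^{uX}\Gamma_5e^{-uX}=\Gamma_5+u[X,\Gamma_5]+\cdots=\Gamma_5+O(\delta t^7)$; here the hypothesis $u=O(\delta t^2)$ is exactly what guarantees that the conjugation does not degrade the order, and the residual $O(\delta t^7)$ is absorbed into the $O(\delta t^7)$ already present in the definition of $\Upsilon_5$. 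For the main term, the outermost factors of $P$ are $e^{s\beta_1X/2}$ on the left and $e^{s\beta_1X/2}$ on the right; these commute exactly with $e^{\pm uX}$ because all are exponentials of $X$, so $e^{uX}e^{s\beta_1X/2}=e^{(\frac{s\beta_1}{2}+u)X}$ and $e^{s\beta_1X/2}e^{-uX}=e^{(\frac{s\beta_1}{2}-u)X}$ with no error whatsoever. This reproduces $T_{7,4}$ verbatim as in~\eqref{eq:MFT}, and collecting everything gives $S=T_{7,4}+\Gamma_5+\Upsilon_5=T_{7,4}+\Lambda_5$.

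The argument has essentially no obstacle beyond this order-tracking: all the real content lives in Theorem~\ref{thm:main} and in the known order $\Gamma_5=O(\delta t^5)$ of the Forest--Ruth--Suzuki error. The one point that must not be glossed over is that conjugating the time-independent Trotter error $\Gamma_5$ by $e^{uX}$ preserves its leading order --- a step that would fail without $u=O(\delta t^2)$ --- so I would state it explicitly rather than treat it as self-evident. If a more refined form of $\Lambda_5$ were desired one could feed in the known structure of the leading Forest--Ruth error coefficient~\cite{Omelyan2002,Childs2021}, but nothing beyond the stated $O(\delta t^5)$ bound requires it.
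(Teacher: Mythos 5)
Your proposal is correct and follows exactly the paper's route: the corollary is obtained by substituting the Forest--Ruth--Suzuki identity~\eqref{eq:Forest-Ruth} into Eq.~\eqref{eq:uX} and merging the outer $X$-exponentials with $e^{\pm uX}$. Your explicit check that conjugating $\Gamma_5$ by $e^{uX}$ only shifts it by $O(\delta t^7)$ is a point the paper leaves implicit, but it is consistent with the paper's definition of $\Lambda_5$ up to that order.
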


Remarkably, the 7-exponential formula~\eqref{eq:MFT} is minimum among fourth-order formulas.
To prove this, we recall that our time-dependent problem involves the time-independent problems as a special case of $x(t)=x$ and $y(t)=y$.
For the special case, Eq.~\eqref{eq:MFT} reduces to the fourth-order Forest-Ruth-Suzuki formula~\cite{Forest1990,Suzuki1990} as one can check easily.
Recall that the 7-exponential fourth-order Forest-Ruth-Suzuki formula is minimum among time-independent problems.
Thus every fourth-order formula for general time-dependent problems requires at least 7 exponentials.
Theorem~\ref{thm:main} shows that such a formula actually exists despite a reasonable inference that more general problems demand more exponentials.

The error parts $\Gamma_5$ and $\Upsilon_5$ of the total error $\Lambda_5$ are time-independent and -dependent components, respectively.
As we discussed, $\Upsilon_5=0$ holds for the special case that $x(t)$ and $y(t)$ are time-independent. 
Since $\Gamma_5$ is the well-known Trotter error for time-independent case~\cite{Childs2021}, $\Upsilon_5$ is the additional error source particular to the time dependence.

We remark on the time-reversal character of the MFT.
The exact evolution satisfies
\hl{
\begin{align}
S\left(\mu-\frac{\delta t}{2},\mu+\frac{\delta t}{2}\right)^{-1} =S\left(\mu+\frac{\delta t}{2},\mu-\frac{\delta t}{2}\right),
\end{align}}
which follows from $S(\mu+\frac{\delta t}{2},\mu-\frac{\delta t}{2})S(\mu-\frac{\delta t}{2},\mu+\frac{\delta t}{2})=1$.
Nicely, the MFT $T_{7,4}$ also has this property.
\hl{Namely, if we apply Theorem~\ref{thm:main} to $S(\mu-\frac{\delta t}{2},\mu+\frac{\delta t}{2})$ by substituting $\delta t\to -\delta t$ to have
\begin{align}
&S(\mu-\frac{\delta t}{2},\mu+\frac{\delta t}{2})=T_{7,4}'+\Lambda_5',\\
&T_{7,4}'=e^{(\frac{s \beta_1'}{2}+u') X}e^{s \beta_2' Y}e^{\frac{1-s}{2}\beta_1' X}e^{(1-2s)\beta_2' Y}\notag\\ 
&\qquad\qquad\times e^{\frac{1-s}{2}\beta_1' X}e^{s \beta_2' Y}e^{(\frac{s \beta_1'}{2}-u') X},
\end{align}
then we obtain
\begin{align}\label{eq:T74_rev}
T_{7,4}'{}^{-1}=T_{7,4}.
\end{align}
The keys to confirming Eq.~\eqref{eq:T74_rev} are 
the relations $\beta_i'=-\beta_i$ $(i=1,2)$ and $\beta_{12}'=-\beta_{12}$ and their consequence $u'=u$.}
The time-reversal character~\eqref{eq:T74_rev} implies that the error is of odd order~\cite{Omelyan2002}.

Although we have substituted the minimum Forest-Ruth-Suzuki formula~\eqref{eq:Forest-Ruth} into Eq.~\eqref{eq:uX},
we can use any fourth-order Trotterization formulas instead (see, e.g.,  Ref.~\cite{Ostmeyer2023} for a list of them).
One can use another to reduce the error $\Gamma_5$ by using more exponentials than seven. 
For instance, using Omelyan's Forest-Ruth formula (see Refs.~\cite{Omelyan2002,Ostmeyer2023} for detail and real numbers $a_j$'s and $b_j$'s)
\begin{align}
e^{\beta_1 X+\beta_2 Y} &= e^{a_1 \beta_1X}e^{b_1 \beta_2 Y}e^{a_2 \beta_1 X}e^{b_2 \beta_2 Y}e^{a_3 \beta_1 X}\notag\\
&\qquad\times e^{b_2 \beta_2 Y}e^{a_2 \beta_1 X}e^{b_1 \beta_2 Y}e^{a_1 \beta_1X}+\Gamma_5'
\end{align}
in Eq.~\eqref{eq:uX},
we obtain the following corollary giving the 9-exponential fourth-order formula.
\begin{corollary}[9-exponential formula]\label{coro:9exp}
It follows
\begin{align}\label{eq:9exp}
&S\left(\mu+\frac{\delta t}{2},\mu-\frac{\delta t}{2}\right)=T_{9,4}+\Lambda_5';\notag\\
&T_{9,4}\equiv e^{(a_1 \beta_1+u)X}e^{b_1 \beta_2 Y}e^{a_2 \beta_1 X}e^{b_2 \beta_2 Y}e^{a_3 \beta_1 X}\notag\\
&\qquad\qquad\times e^{b_2 \beta_2 Y}e^{a_2 \beta_1 X}e^{b_1 \beta_2 Y}e^{(a_1 \beta_1-u)X};\notag\\
&\Lambda_5' = \Gamma_5'+\Upsilon_5.
\end{align}
\end{corollary}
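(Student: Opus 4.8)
The plan is to follow exactly the route that produced the MFT in Corollary~\ref{coro:MFT}, only substituting Omelyan's nine-exponential fourth-order decomposition for the Forest--Ruth--Suzuki one. So the first step would be to start from Eq.~\eqref{eq:uX} of Theorem~\ref{thm:main}, \[S\left(\mu+\tfrac{\delta t}{2},\mu-\tfrac{\delta t}{2}\right)=e^{uX}\,e^{\beta_1X+\beta_2Y}\,e^{-uX}+\Upsilon_5,\] which holds under the standing assumption $u=\beta_{12}/\beta_2=O(\delta t^2)$ (with the roles of $X$ and $Y$ swapped if necessary, as discussed after Theorem~\ref{thm:main}). Into this I would insert the quoted Omelyan identity $e^{\beta_1X+\beta_2Y}=\Pi+\Gamma_5'$, where $\Pi\equiv e^{a_1\beta_1X}e^{b_1\beta_2Y}e^{a_2\beta_1X}e^{b_2\beta_2Y}e^{a_3\beta_1X}e^{b_2\beta_2Y}e^{a_2\beta_1X}e^{b_1\beta_2Y}e^{a_1\beta_1X}$ is the palindromic nine-factor product and $\Gamma_5'=O(\delta t^5)$ is its error; this is legitimate because $\beta_1,\beta_2=O(\delta t)$, so Omelyan's fourth-order accuracy in the step variable becomes an $O(\delta t^5)$ error in the present setting.

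The second step is to absorb the two conjugating factors $e^{\pm uX}$ into the outermost exponentials of $\Pi$. Because $X$ commutes with itself, $e^{uX}e^{a_1\beta_1X}=e^{(a_1\beta_1+u)X}$ and $e^{a_1\beta_1X}e^{-uX}=e^{(a_1\beta_1-u)X}$ hold exactly, so $e^{uX}\Pi e^{-uX}=T_{9,4}$ with no additional error. For the leftover term I would expand $e^{uX}\Gamma_5'e^{-uX}=\Gamma_5'+[uX,\Gamma_5']+\tfrac12[uX,[uX,\Gamma_5']]+\cdots$, and since $u=O(\delta t^2)$ while $\Gamma_5'=O(\delta t^5)$, every correction is $O(\delta t^7)$, i.e. $e^{uX}\Gamma_5'e^{-uX}=\Gamma_5'+O(\delta t^7)$. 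Collecting everything gives $S=T_{9,4}+\Gamma_5'+\Upsilon_5+O(\delta t^7)=T_{9,4}+\Lambda_5'$ with $\Lambda_5'=\Gamma_5'+\Upsilon_5$, the residual $O(\delta t^7)$ being harmlessly absorbed into the $O(\delta t^7)$ already carried by $\Upsilon_5$ in Eq.~\eqref{eq:error}.

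To finish I would note that $\Lambda_5'=O(\delta t^5)$, confirming $T_{9,4}$ is genuinely fourth order: $\Gamma_5'=O(\delta t^5)$ by construction of Omelyan's formula, and $\Upsilon_5=\Omega_3+\Omega_4-\tfrac{u^2}{2}\beta_2[X,[X,Y]]+O(\delta t^7)=O(\delta t^5)$ since $\Omega_3,\Omega_4=O(\delta t^5)$ (Appendix~\ref{app:Magnus}) and $u^2\beta_2=O(\delta t^4)\cdot O(\delta t)=O(\delta t^5)$. There is no genuine obstacle here: the corollary is a direct specialization of Theorem~\ref{thm:main} to one more time-independent fourth-order product formula, exactly parallel to Corollary~\ref{coro:MFT}, and the only point requiring a little care is the order bookkeeping when conjugating $\Gamma_5'$ by $e^{\pm uX}$ — everything else is just a regrouping of exponentials of $X$ that mutually commute. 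The payoff, rather than any difficulty, is conceptual: relative to $T_{7,4}$ one spends two extra exponentials to replace $\Gamma_5$ by Omelyan's $\Gamma_5'$, which has a smaller leading error coefficient, while the time-dependent component $\Upsilon_5$ is left untouched.
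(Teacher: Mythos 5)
Your proposal is correct and takes essentially the same route as the paper: substitute Omelyan's nine-exponential fourth-order decomposition for $e^{\beta_1X+\beta_2Y}$ into Eq.~\eqref{eq:uX} of Theorem~\ref{thm:main} and merge $e^{\pm uX}$ into the outermost $X$-exponentials. Your explicit bookkeeping of the $O(\delta t^7)$ correction from conjugating $\Gamma_5'$ by $e^{\pm uX}$ is a detail the paper leaves implicit but changes nothing.
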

\noindent Here, in exchange for adding two more exponentials, we have smaller coefficients in $\Gamma_5'$ than in $\Gamma_5$.
\hl{As we will see below in an example model in Sec.~\ref{sec:Ising}, the 9-exponential formula produces less error than the MFT and the Suzuki formula model when compared at the same number of elementary quantum gates.
}

\hl{
Before closing this section, we remark on the generalization beyond the fourth-order formula.
Recall that the MFT~\eqref{eq:MFT} reduces to the Forest-Ruth-Suzuki for time-independent cases when we set $u=0$ or eliminate the SWT.
This implies that the MFT can be derived by applying an appropriate SWT to the time-independent formula.
This approach is promising to derive higher-order minimum Trotterization formulas and, in fact, leads to the following minimum sixth-order Trotterization (MST) formula $T_{15,7}$ consisting of 15 exponentials (see Appendix~\ref{app:higher} for the derivation).
\begin{theorem}[Minimum sixth-order 15-exponential formula]\label{thm:MST}
Suppose that $\beta_1=\Theta(\delta t)$ and $\beta_2=\Theta(\delta t)$.
Then it follows
\begin{align}\label{eq:MST}
&S\left(\mu+\frac{\delta t}{2},\mu-\frac{\delta t}{2}\right)=T_{15,7}+\Lambda_7;\quad \Lambda_7=O(\delta t^7); \notag\\
&T_{15,7}\equiv e^{(a_1 \beta_1+u_4)X}e^{(b_1 \beta_2 +u_3)Y}e^{(a_2 \beta_1+u_2) X}e^{(b_2 \beta_2 +u_1 -z)Y}\notag\\
&\qquad\quad \times e^{(a_3 \beta_1-w) X}
 e^{(b_3 \beta_2 +z)Y}e^{(a_4 \beta_1 +w) X}e^{b_4 \beta_2 Y}\notag\\
 &\qquad\quad \times e^{(a_4 \beta_1+w) X}e^{(b_3 \beta_2 +z)Y}e^{(a_3 \beta_1 -w)X} e^{(b_2 \beta_2-u_1-z) Y}\notag\\
&\qquad\quad \times e^{(a_2 \beta_1-u_2) X}e^{(b_1 \beta_2 -u_3)Y}e^{(a_1 \beta_1-u_4)X},
\end{align}
where $a_1,\dots,a_4$ and $b_1,\dots,b_4$ are given in Eq.~\eqref{eq:abvals_Yoshida}, $u_1,\dots,u_4$ of $O(\delta t^2)$ and $w$ and $z$ of $O(\delta t^3)$ are given by Eqs.~\eqref{eq:ceq1} and \eqref{eq:ceq2}.
\end{theorem}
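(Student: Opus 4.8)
The plan is to lift the MFT derivation one order higher: replace the single conjugation $e^{uX}$ of Theorem~\ref{thm:main} by a Schrieffer--Wolff-type transformation $e^{G}$ whose generator $G$ is built order by order in $\delta t$, and replace the Forest--Ruth--Suzuki formula~\eqref{eq:Forest-Ruth} by a time-independent sixth-order symmetric composition with the coefficients $a_1,\dots,a_4,b_1,\dots,b_4$ of Eq.~\eqref{eq:abvals_Yoshida}.

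First I would truncate the Magnus series: using $\Omega_1=O(\delta t)$, $\Omega_2=O(\delta t^3)$, $\Omega_3=\Omega_4=O(\delta t^5)$, and $\Omega_n=O(\delta t^7)$ for $n\ge5$ (the latter from the same parity argument that removes the even-order terms; cf.\ Appendix~\ref{app:Magnus}), one obtains
\begin{align}
S\left(\mu+\tfrac{\delta t}{2},\mu-\tfrac{\delta t}{2}\right)=\exp(\Omega_1+\Omega_2+\Omega_3+\Omega_4)+O(\delta t^7).
\end{align}
The hypothesis $\beta_1,\beta_2=\Theta(\delta t)$ plays the role that $\beta_{12}/\beta_2=O(\delta t^2)$ played in Theorem~\ref{thm:main}: it guarantees that the generator constructed next stays small ($O(\delta t^2)$ at leading order, $O(\delta t^3)$ at next order) instead of blowing up.

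Second I would construct $G$. Writing $G=G_{(2)}+G_{(3)}$ with $G_{(2)}=O(\delta t^2)$, $G_{(3)}=O(\delta t^3)$ (each valued in $\mathrm{span}\{X,Y\}$) and demanding $e^{G}e^{\Omega_1}e^{-G}=\exp(\Omega_1+\Omega_2+\Omega_3+\Omega_4)+O(\delta t^7)$, i.e.\ (using $e^{G}e^{\Omega_1}e^{-G}=\exp(e^{\mathrm{ad}_G}\Omega_1)$)
\begin{align}
[G,\Omega_1]+\tfrac12[G,[G,\Omega_1]]+\tfrac16\mathrm{ad}_G^3\Omega_1+\cdots=\Omega_2+\Omega_3+\Omega_4+O(\delta t^7),
\end{align}
I would solve it by matching powers of $\delta t$: at order $\delta t^3$ this is $[G_{(2)},\beta_1X+\beta_2Y]=\beta_{12}[X,Y]$, fixing the leading $O(\delta t^2)$ data (the analogue of $u=\beta_{12}/\beta_2$); at order $\delta t^5$ the residue $\Omega_3+\Omega_4$, together with the $O(\delta t^5)$ part of $\Omega_2$ and the nested term $\tfrac12[G_{(2)},[G_{(2)},\Omega_1]]$, must be cancelled by $[G_{(3)},\beta_1X+\beta_2Y]$ and by the $O(\delta t^4)$ part of $G_{(2)}$. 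These are the equations that reduce to~\eqref{eq:ceq1}--\eqref{eq:ceq2}, with $u_1,\dots,u_4$ the $O(\delta t^2)$ unknowns and $w,z$ the $O(\delta t^3)$ ones. Since $\Omega_1=\beta_1X+\beta_2Y$ is exact, I would then Trotterize $e^{\Omega_1}$ with the 15-exponential sixth-order composition and fold $e^{\pm G}$ into it: commuting the small outer factors successively past the outer exponentials of that product by BCH redistributes the $O(\delta t^2)$ part of $G$ into shifts $\pm u_1,\dots,\pm u_4$ of the outermost $X$- and $Y$-arguments (placed antisymmetrically because this part of $G$ is even in $\delta t$) and the $O(\delta t^3)$ part into the interior shifts $\pm w,\pm z$ (placed symmetrically because it is odd), with the BCH remainders absorbed into redefinitions of these same six parameters; the exponential count stays fifteen, reproducing Eq.~\eqref{eq:MST}. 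Equivalently, and cleanest for a fully rigorous writeup, one \emph{defines} $T_{15,7}$ by Eq.~\eqref{eq:MST} and checks directly, expanding $T_{15,7}$ and $\exp(\sum_n\Omega_n)$ via BCH through $O(\delta t^6)$, that their exponents agree once~\eqref{eq:ceq1}--\eqref{eq:ceq2} hold; the time-reversal structure built into the ansatz ($u_i'=u_i$ with $+u_i$ on the left half and $-u_i$ on the right, $w'=-w$ and $z'=-z$ placed symmetrically), which forces $T_{15,7}'{}^{-1}=T_{15,7}$ exactly as in Eq.~\eqref{eq:T74_rev}, makes the discrepancy odd in $\delta t$, so that matching through $\delta t^6$ automatically promotes the error to $O(\delta t^7)$. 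Minimality then follows verbatim as for Corollary~\ref{coro:MFT}: for $x(t)=x$ and $y(t)=y$ one has $\Omega_{n\ge2}=0$, hence $u_i=w=z=0$, and $T_{15,7}$ collapses to the time-independent 15-exponential formula, known to be minimal at sixth order among formulas of the form~\eqref{eq: product ansatz}.

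The step I expect to be the main obstacle is the consistency of the order-$\delta t^5$ equations: the residue $\Omega_3+\Omega_4$ lies in the span of $[X,[X,Y]]$, $[Y,[X,Y]]$ and the depth-four nested commutators, and one must verify that the six scalars $u_1,\dots,u_4,w,z$ — constrained further by the time-reversal antisymmetry — suffice to cancel it along with the cross terms generated by $G_{(2)}$, without spoiling the sixth-order faithfulness of the time-independent block. Establishing that this over-determined-looking linear system is actually solvable (the parity-induced vanishing of the even-order obstructions being essential here) is the crux; the remaining BCH bookkeeping is mechanical and I would relegate it to Appendix~\ref{app:higher}.
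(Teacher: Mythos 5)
Your fallback route --- defining $T_{15,7}$ by the decorated ansatz and verifying directly, via repeated BCH expansion, that its exponent matches the sixth-order Magnus/continuous-BCH expansion once the six matching conditions hold --- is precisely the paper's proof in Appendix~\ref{app:higher}: the paper writes down $\Phi$ in Eq.~\eqref{eq:Phidef}, computes the coefficients $c_{12},c_{112},\dots,c_{1212}$ as polynomials in $u_1,\dots,u_4,w,z$, and equates them to the Magnus coefficients in Eq.~\eqref{eq:Magnus6}. The crux you flag (solvability of the seemingly over-determined system) is resolved there by a decoupling you did not anticipate: the four equations~\eqref{eq:ceq1} --- matching $[X,Y]$ \emph{and the three depth-four commutators} --- are linear and independent in $u_1,\dots,u_4$ alone and are uniquely solvable under $\beta_1,\beta_2=\Theta(\delta t)$; substituting their solution into~\eqref{eq:ceq2} then leaves two independent linear equations for $w,z$, which fix the two depth-three commutators. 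So the system is exactly determined, not over-determined, and the roles are the reverse of what you assign: the outer antisymmetric shifts $u_i$ handle $\Omega_2$ and $\Omega_4$, while the inner symmetric shifts $w,z$ handle $\Omega_3$.

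Your primary route --- a pure Schrieffer--Wolff conjugation $e^{G}e^{\Omega_1}e^{-G}$ with $G=G_{(2)}+G_{(3)}$ valued in $\mathrm{span}\{X,Y\}$ --- has a genuine gap and cannot work as stated. Since $e^{G}e^{\Omega_1}e^{-G}=\exp(e^{\mathrm{ad}_G}\Omega_1)$ and $G=g_1X+g_2Y$ with $g_i=O(\delta t^2)$, the exponent is $\Omega_1+[G,\Omega_1]+\tfrac12\mathrm{ad}_G^2\Omega_1+O(\delta t^7)$, where $[G,\Omega_1]\propto[X,Y]$ and $\tfrac12\mathrm{ad}_G^2\Omega_1\in\mathrm{span}\{[X,[X,Y]],[Y,[X,Y]]\}$; no depth-four nested commutator appears below $O(\delta t^7)$, whereas $\Omega_4=\sum_{ij}\beta_{ij12}[Z_i,[Z_j,[X,Y]]]$ is generically $\Theta(\delta t^5)$. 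Even the depth-three sector is obstructed: imposing the leading condition $g_1\beta_2-g_2\beta_1=\beta_{12}$ reduces $\tfrac12\mathrm{ad}_G^2\Omega_1$ to $\tfrac{\beta_{12}}{2}(g_1[X,[X,Y]]+g_2[Y,[X,Y]])$, leaving one free parameter against the two independent targets $\beta_{112},\beta_{212}$. This is exactly why the paper stresses that $w$ and $z$, which enter \emph{symmetrically} into the product, "are not SWTs": symmetric shifts cannot be produced by folding a conjugation $e^{\pm G}$ into the Yoshida product~\eqref{eq:Yoshida}, so the sixth-order formula is not a similarity transform of the time-independent one, unlike Theorem~\ref{thm:main}. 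Your minimality argument (reduction to Yoshida's 15-exponential formula when $x,y$ are constant, so all decorations vanish) and the parity/time-reversal observation are correct and match the paper.
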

}

\hl{
We make some remarks on Eq.~\eqref{eq:MST}.
First, as in the MFT, Eq.~\eqref{eq:MST} reduces to Yoshida's minimum sixth-order formula~\cite{Yoshida1990} in the time-independent case, where $u_1=\dots=u_4=w=z=0$.
Namely, these variables are introduced as a time-dependent generalization.
Second, the 15 exponentials in Eq.~\eqref{eq:MST} are placed symmetrically around the central $e^{b_4\beta_2 Y}$, except for the asymmetric appearance of $\pm u_j$ $(j=1,\dots,4)$ due to the SWTs.
Third, $w$ and $z$ appearing symmetrically are not SWTs but are newly introduced at the sixth order, enabling $T_{15,7}$ to reproduce $\Omega_3$ in the Magnus expansion.
These structures would be generalized to even higher orders: The SWTs and the symmetric decorations like $w$ and $z$ would produce the minimum Trotterization formulas for time-dependent cases out of those for time-independent ones.
Finally, like the MFT, the MST is a minimum sixth-order formula because it has the same number of exponentials as Yoshida's formula does for time-independent generators.
}

\section{algorithm for unitary dynamics}
The arguments in the previous sections have been so general that $A(t)$ may or may not be anti-Hermitian.
Nonetheless, the case where $A(t)$ is anti-Hermitian is particularly important because it includes the Schr\"{o}dinger equation, for which $A(t)=-iH(t)$ with $H(t)$ being a Hamiltonian.
In this section, we focus on this case, summarizing the algorithm for the Hamiltonian simulation.
Considering Eq.~\eqref{eq: time dependence}, we focus on the following Hamiltonian
\begin{align}\label{eq:Hamiltonian}
H(t)=f(t)F+g(t)G,
\end{align}
where $f(t)$ and $g(t)$ are real, and $F$ and $G$ are noncommuting Hermitian operators.
To address these problems, one can utilize the above results with the replacements $A(t)\to -iH(t)$, $x(t)\to f(t)$, $X\to -iF$, etc.
As relevant applications, the MFT offers an efficient implementation in quantum dynamics simulation on a digital quantum computer and matrix-product-state(MPS-)based classical simulations.

\begin{figure}
\includegraphics[width=\columnwidth]{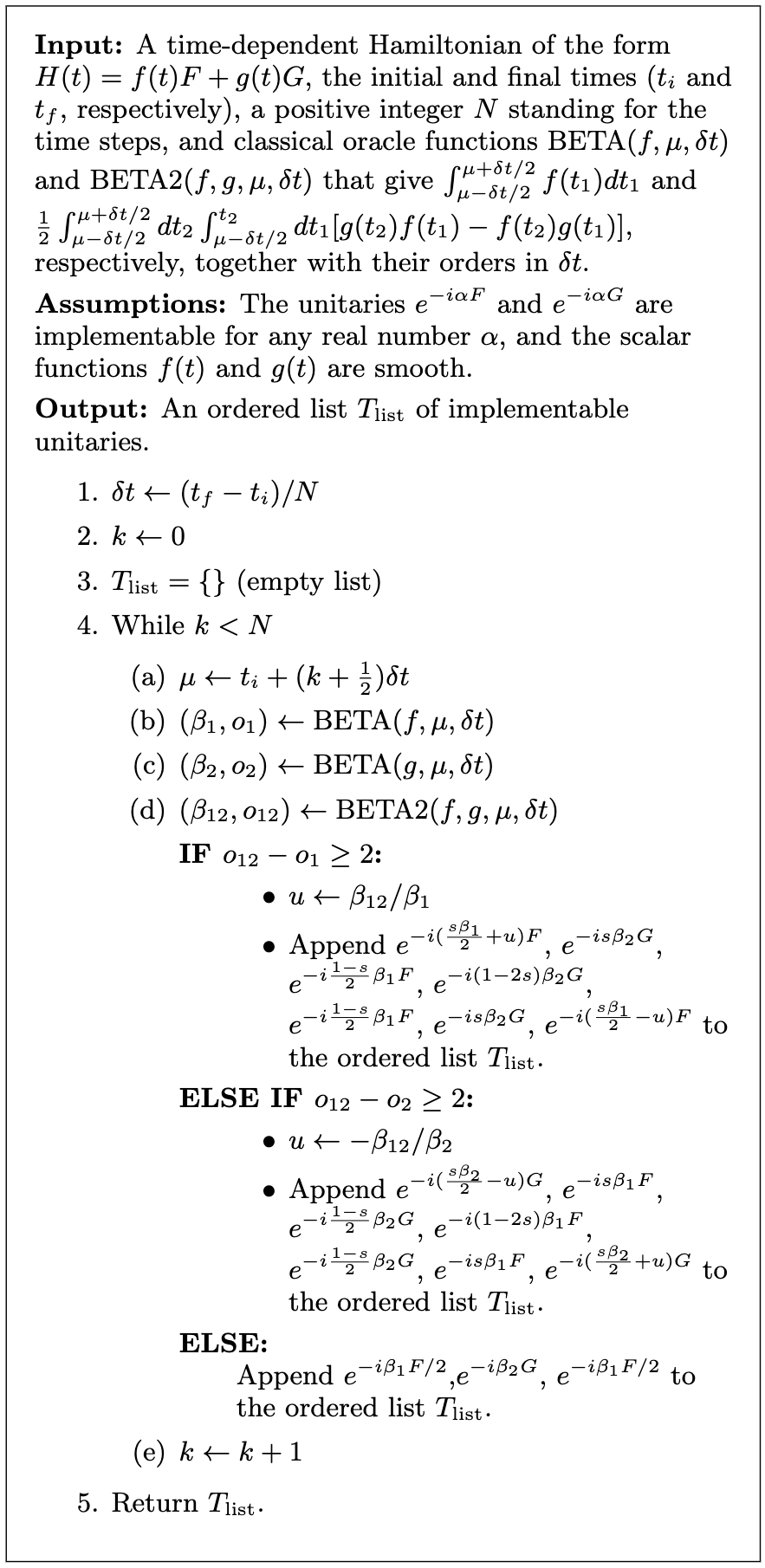}
    \caption{Pseudocode for the Hamiltonian simulation based on the MFT.}
    \label{fig:pseudocode}
\end{figure}

For usage, we summarize the algorithm as a pseudocode in Fig.~\ref{fig:pseudocode}.
\hl{Here we focus on the MFT since it works similarly with the other formulas.
In the algorithm,} we provide an ordered list of unitaries that brings an initial state $\ket{\psi(t_i)}$ at time $t=t_i$ to a final state $\ket{\psi(t_f)}$ according to the Hamiltonian~\eqref{eq:Hamiltonian} based on the MFT.
We can also use the 9-exponential formula instead of the MFT in the algorithm by replacing the seven exponentials to append with 9 exponentials appearing in Eq.~\eqref{eq:9exp}.

We note that each time step requires the oracle functions that evaluate integrals, which can be cumbersome.
In such a case, we can do it approximately up to the fourth order of $\delta t$ by using the Taylor or orthogonal-polynomial expansion for $x(t)$ and $y(t)$ (see also Appendix~\ref{app:poly}).

\section{Comparison to Other Time-Dependent Formulas}
Here we compare the MFT and 9-exponential formulas with other Trotterization formulas for time-dependent Hamiltonians~\eqref{eq:Hamiltonian}.
The known second-order formulas are equivalent to the mid-point rule~\eqref{eq:midpoint} within errors of $O(\delta t^3)$.
In fact, the second-order Suzuki~\cite{Suzuki1993,Hatano2005} is the same as the mid-point rule, and another integrator $T_\mathrm{HdR}=\mathcal{T}\exp(-i\int_{\mu}^{\mu+\delta t/2}x(s)Xds)\mathcal{T}\exp(-i\int_{\mu-\delta t/2}^{\mu+\delta t/2}y(s)Yds)$ $\times\mathcal{T}\exp(-i\int_{\mu-\delta t/2}^{\mu}x(s)Xds)$ proposed by Huyghebaert--de Readt~\cite{Huyghebaert1990} satisfies $T_\mathrm{HdR}=T_2+O(\delta t^3)$ as one can check easily.
Furthermore, those second-order 3-exponential formulas are minimum as they are known to be minimum for time-independent cases.
Thus we take Suzuki's formula (i.e., mid-point rule) as the representative for the second-order formula.

The fourth-order formula can vary in terms of error and gate complexity depending on the choice of algorithm. 
The fourth-order Suzuki formula for time-dependent Hamiltonians~\cite{Suzuki1993,Hatano2005,Wiebe2010} is made of the mid-point rule~\eqref{eq:midpoint} (with $x(t)X\to -if(t)F$ and $y(t)Y\to -ig(t)G$) as 
\begin{align}
\begin{split}
    &T_{4}\left(\mu+\frac{\delta t}{2},\mu-\frac{\delta t}{2}\right)\\
    &=T_{2}\left(\mu+\frac{\delta t}{2},\mu_4\right)
    T_{2}(\mu_4,\mu_3)\\
    &\quad \times T_{2}(\mu_3,\mu_2)
    T_{2}(\mu_2,\mu_1)
    T_{2}\left(\mu_1,\mu-\frac{\delta t}{2}\right),
\end{split}
\end{align}
where $\mu_1=\mu-\frac{1-2w}{2}\delta t$, $\mu_2=\mu-\frac{1-4w}{2}\delta t$, $\mu_3=\mu+\frac{1-4w}{2}\delta t$, $\mu_4=\mu+\frac{1-2w}{2}\delta t$, and $w =(4-4^{1/3})^{-1}$.
Note that $T_4$ consists of 11 exponentials for Hamiltonians in Eq.~\eqref{eq:Hamiltonian} because four pairs of $e^{-i\alpha F}$ ($\alpha\in\mathbb{R}$) can be combined into one between $T_2$'s.
Note again that the MFT has only 7 exponentials.

\begin{figure}
  \includegraphics[width=\columnwidth]{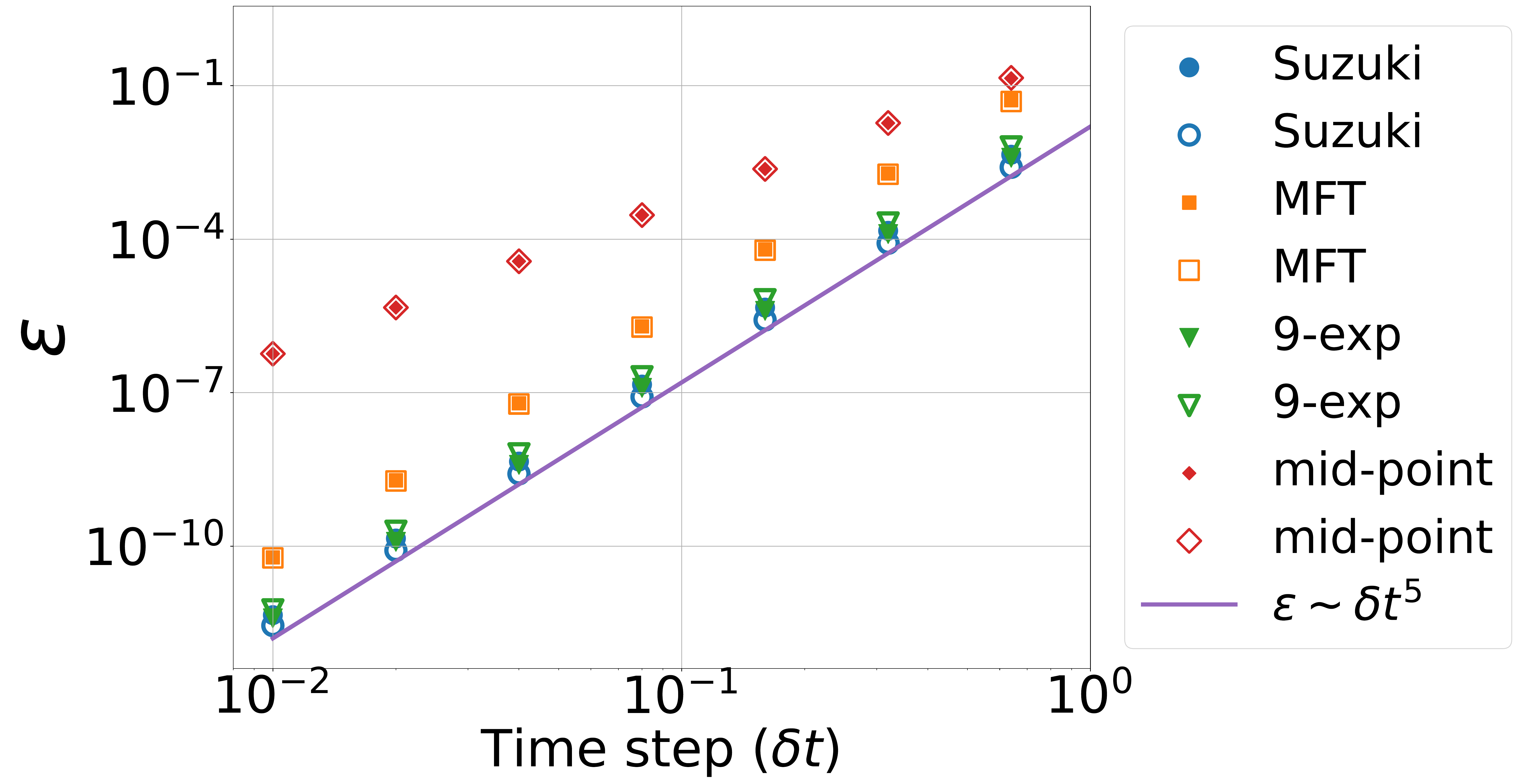}
\caption{Errors~\eqref{eq:error_def} of Trotterization formulas for the time-dependent Hamiltonian~\eqref{eq:Ht} at $\mu=1$.
Different symbols correspond to the fourth-order Suzuki (circle), MFT (square), 9-exponential (triangle), and mid-point rule (diamond) formulas.
The filled symbols correspond to the assignment $\sigma_x\to f(t)F$ and $t\sigma_z\to g(t)G$ in Eqs.~\eqref{eq:Hamiltonian} and \eqref{eq:Ht}, whereas the open symbols to the other \hl{assignment $\sigma_x\to g(t)G$ and $t\sigma_z\to f(t)F$ in these equations}.
The solid line guides the eye for the power law $\propto \delta t^5$.}
\label{fig:dt-dep}
\end{figure}

Now we implement each algorithm and numerically compare their accuracy.
To quantify the accuracy, we define the error $\epsilon$ as
\begin{align}
\epsilon =\left\| S\left(\mu+\frac{\delta t}{2},\mu-\frac{\delta t}{2}\right) - T \right\|,\label{eq:error_def}
\end{align}
where $T$ is the time-evolution unitary achieved through an algorithm and $\vert\vert \cdot \vert\vert$ is the Frobenius norm of the operator.
\hl{We have confirmed that the results do not change qualitatively when the spectral norm is used, as shown in Appendix~\ref{app:spec}.}
Figure~\ref{fig:dt-dep} shows the simulation errors for the seminal Landau-Zener Hamiltonian
\begin{align}
H(t)=\sigma_x + t \sigma_z,\label{eq:Ht}
\end{align}
confirming the expected error scaling ($\sigma_x$ and $\sigma_z$ are the Pauli matrices).
Here we fix $\mu=1$, for instance, and plot the errors of each formula against the time step $\delta t$.
Recall that we have two choices in assigning which of $\sigma_x$ and $t \sigma_z$ to $f(t)F$ (and the other to $g(t)G$).
In Fig.~\ref{fig:dt-dep}, the assignment $\sigma_x\to f(t)F$ and $t\sigma_z\to g(t)G$ is shown by filled symbols and the other \hl{assignment $\sigma_x\to g(t)G$ and $t\sigma_z\to f(t)F$} by open ones.
For both assignments, the mid-point rule gives $O(\delta t^3)$ errors, whereas the MFT, 9-exponential, and the fourth-order Suzuki formulas give $O(\delta t^5)$ errors, as expected.

For $\mu=1$ in Fig.~\ref{fig:dt-dep}, we observe that the error coefficient of the MFT is larger than that of the fourth-order Suzuki formula.
This is a penalty to pay with the MFT instead of its advantage of fewer exponentials.
However, it is noteworthy that the 9-exponential formula has error coefficients as small as Suzuki's, even though it still has fewer exponentials.
Here we recall that the errors of the MFT $\Lambda_5$ and the 9-exponential formula $\Lambda_5'$ consist of two parts as in Eqs.~\eqref{eq:MFT} and \eqref{eq:9exp}, including the common contribution $\Upsilon_5$ inherent to the time dependence.
Our observation about the difference in error coefficients derives rather from the time-independent formulas, $\Gamma_5$ and $\Gamma_5'$.

\begin{figure}
\includegraphics[width=\columnwidth]{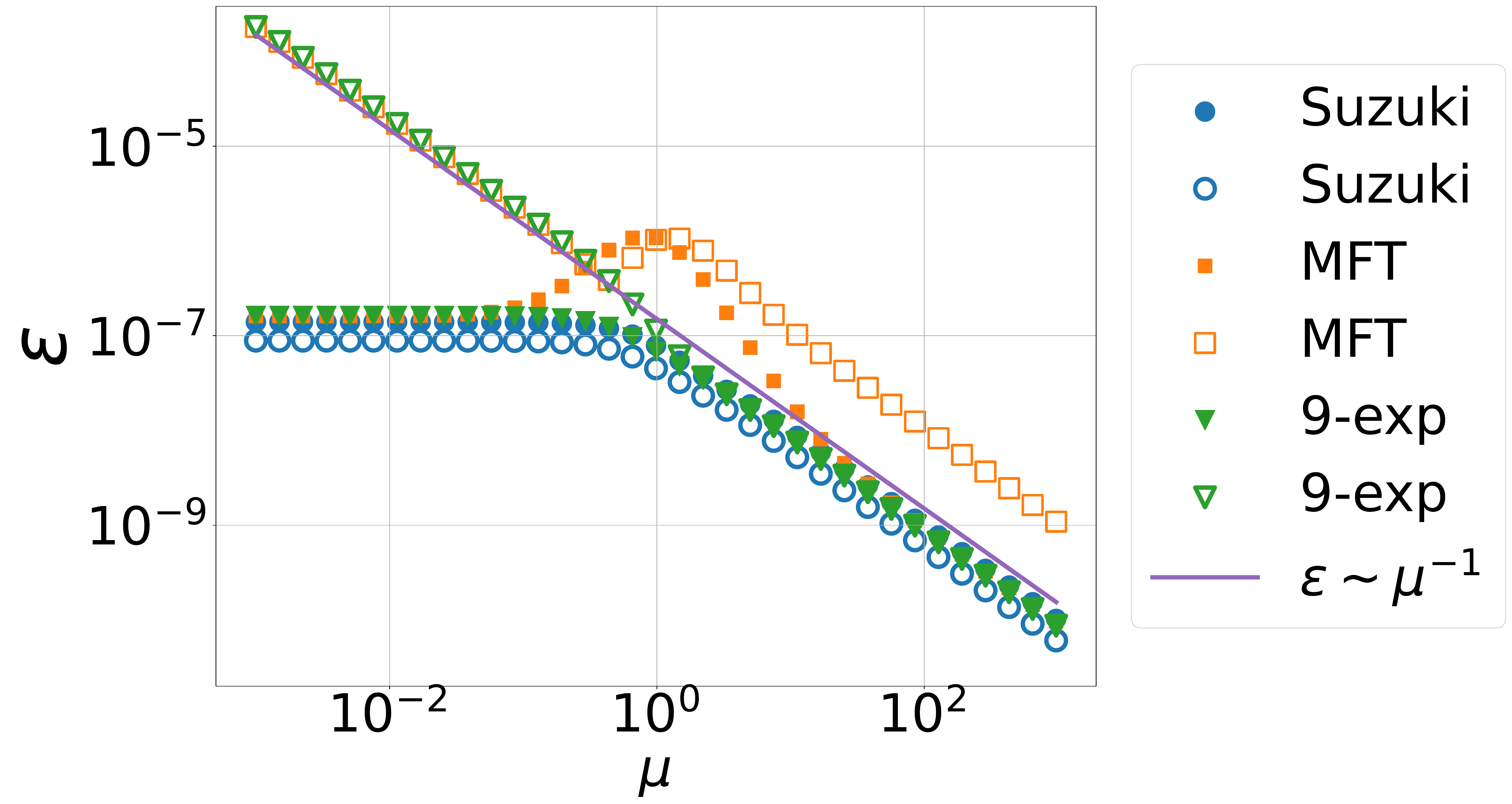}
\caption{Errors~\eqref{eq:error_def} of Trotterization formulas for the time-dependent Hamiltonian~\eqref{eq:Ht} for various $\mu$.
The time step $\delta t$ is taken, depending on $\mu$, as $\delta t = 0.1/\|H(\mu)\| = 0.1/\sqrt{1+\mu^2}$.
The symbols are the same as in Fig.~\ref{fig:dt-dep}: \hl{The filled symbols correspond to the assignment $\sigma_x\to f(t)F$ and $t\sigma_z\to g(t)G$ in Eqs.~\eqref{eq:Hamiltonian} and \eqref{eq:Ht}, whereas the open symbols to the other assignment $\sigma_x\to g(t)G$ and $t\sigma_z\to f(t)F$}.
The solid line guides the eye for the power law $\propto \mu^{-1}$.
}
\label{fig:mu-dep}
\end{figure}

To investigate the $\mu$-dependence systematically, we plot the errors in Fig.~\ref{fig:mu-dep}.
Here we set $\delta t = 0.1/\|H(\mu)\| = 0.1/\sqrt{1+\mu^2}$ for each $\mu$ since the magnitude of $H(t)$ significantly changes in such a wide range.
The figure shows the following two features.
First, as $\mu\to0$, the errors are divergent for the MFT and 9-exponential formulas in the assignment $\sigma_x\to g(t)G$ and $t\sigma_z\to f(t)F$.
This derives from the fact that, in that limit, $\beta_1=O(\delta t^2)$ and $\beta_{12}/\beta_1$.
Hence, to obtain a better approximation, we must choose the appropriate assignment $\sigma_x\to f(t)F$ and $t\sigma_z\to g(t)G$.
Second, the error of the 9-exponential formula, with the appropriate assignment, is as small as the fourth-order Suzuki's over the entire range of $\mu$, whereas that of the MFT can be as good as them in two limiting cases $\mu\to0$ and $\mu\to\infty$.
The worst case for the MFT is $\mu\sim1$, for which the $\delta t$-scaling is shown in Fig.~\ref{fig:dt-dep}.
We note again that its error comes dominantly from $\Gamma_5$ rather than the time-dependent part $\Upsilon_5$.
In fact, the $\mu^{-1}$-scaling and the difference between the two MFT results in the large $\mu$ regime can be understood by analyzing $\Gamma_5$ (see Appendix~\ref{app:error_detail} for details).

\section{Application to a quantum circuit}\label{sec:Ising}
\hl{Having studied the basic properties of the MFT and the 9-exponential formula in a single qubit model, we here apply them to a more practical model.
We consider a quantum Ising chain of length $L$ under an oscillatory transverse field, whose Hamiltonian is in the form of Eq.~\eqref{eq:Hamiltonian} with
\begin{align}
    &f(t) = \sin(t),\quad  F=\sum_{i=1}^L h_x\sigma_x^i,\label{eq:HamIsing1}\\
    &g(t) = 1, \quad G=\sum_{i=1}^L (J\sigma_z^i \sigma_z^{i+1}+h_z\sigma_z^i),\label{eq:HamIsing2}
\end{align}
where $\sigma_x^i$ and $\sigma_z^i$ are the Pauli matrices acting on the site $i$.
We impose the periodic boundary condition $\sigma_z^{L+1}=\sigma_z^{1}$ and set  $J=-1.0$, $h_z=0.2$, and $h_x=-2.0$.
}

\hl{
Let us compare the cost and accuracy of each algorithm by considering a time evolution from $t_i=0$ to $t_f=\pi$, for example.
For a number $N$ of steps and the step size $\delta t=(t_f-t_i)/N$, we introduce $t_k = t_i + k \delta t$ ($k=0,1,\dots,N$) to study the error 
\begin{align}
   \epsilon \equiv \left\|  S(t_f,t_i) - \prod_{k=1,\dots,N}^{\leftarrow} T(t_{k},t_{k-1}) \right\|,\label{eq:error_TFI}
\end{align}
where the Trotterization $T(t_{k},t_{k-1})$ is taken for either the mid-point rule, MFT, 9-exponential, or Suzuki formulas.
}

\hl{
We focus on the number of 1-qubit and 2-qubit gates, $N_\mathrm{gates}$, in each Trotterization.
For the mid-point rule, we have $T(t,s)=e^{-i\sin(\mu)\delta t F/2}e^{-i\delta t G}e^{-i\sin(\mu)\delta t F/2}$ ($\mu\equiv \frac{t+s}{2}$), which is further decomposed into 1- and 2-qubit gates by $e^{-i\sin(\mu)\delta t F/2}=\prod_{i=1}^L e^{-i\sin(\mu)\delta t \sigma_x^i/2}$ and $e^{-i\delta t G}=\prod_{i=1}^L e^{-iJ \sigma_z^i \sigma_z^{i+1}\delta t}\prod_{i=1}^L e^{-i h_z \sigma_z^i \delta t }$.
Thus, we have $N_\mathrm{gates}=5L N$ for the mid-point rule.
Likewise, we have $N_\mathrm{gates}=10L N$ for the MFT, $13LN$ for the 9-exponential formula, and $15LN$ for the Suzuki formula.
}

\begin{figure}
\includegraphics[width=\columnwidth]{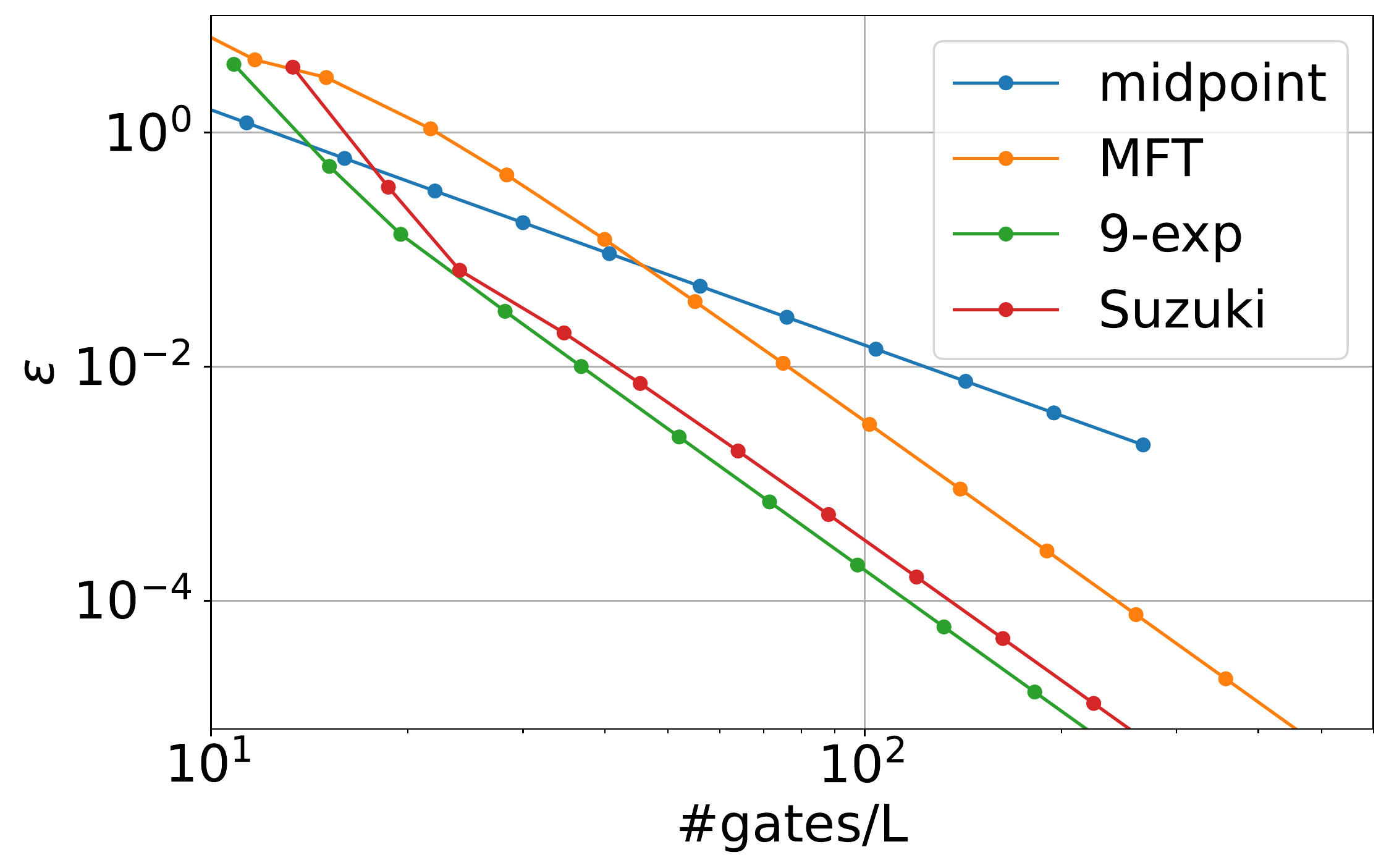}
\caption{\hl{Errors~\eqref{eq:error_TFI} of Trotterization formulas for the time-dependent quantum Ising Hamiltonian [Eqs.~\eqref{eq:HamIsing1} and \eqref{eq:HamIsing2}] at $L=6$ plotted against $N_\mathrm{gates}/L$.
The time evolution from $t_i=0$ to $t_f=\pi$ is calculated for steps $N$ ranging from 5 to 400, which is converted to $N_\mathrm{gate}$ for each Trotterization formula shown in the legend (see also text).}
}
\label{fig:error_vs_gates}
\end{figure}

\hl{
Figure~\ref{fig:error_vs_gates} shows the error~\eqref{eq:error_TFI} of each Trotterization calculated for several $N$ ranging from 5 to 400.
Here, the horizontal axis is taken as $N_\mathrm{gates}/L$, and the exact solution $S(t_f,t_i)$
was obtained by integrating with sufficiently small stepsize to guarantee asymptotic convergence below all other error scales.
We observe expected error scalings when $N_\mathrm{gates}\propto \delta t^{-1}$ increases; $\epsilon\propto \delta N_\mathrm{gates}^{-2}$ for the second-order mid-point rule and $\epsilon\propto N_\mathrm{gates}^{-4}$ for the other fourth-order formulas.
We note that these scaling exponents are larger by one than those for a single step because we successively operate $N\propto \delta t^{-1}\propto N_\mathrm{gates}$ steps.
}

\hl{
Importantly, the 9-exponential formula produces less error than the Suzuki formula at a given number of gates.
When compared with the same $N$, the Suzuki formula has less error (data not shown), but it requires $N_\mathrm{gates}=15LN$ gates that are more than $N_\mathrm{gates}=13LN$ of the 9-exponential formula.
This difference in the required number of gates results in the 9-exponential formula obtaining the best performance, as shown in Fig.~\ref{fig:error_vs_gates}
In contrast, the MFT gives larger errors than the other fourth-order formula, even though it requires fewer exponentials due to the large error coefficient in $\Gamma_5$.
}

\section{Discussion and Conclusion}
We studied the generalization of Trotterization formulas to time-dependent generators.
While the second-order formula is well-known, we derived a 7-exponential fourth-order formula $T_{7,4}$
\hl{and a 15-exponential sixth-order formula $T_{15,7}$} as in Eqs.~\eqref{eq:MFT} and \eqref{eq:MST}.
\hl{Remarkably, we showed that the number seven (fifteen) of exponentials are the minimum to make a fourth-order (sixth-order) formula, and hence we named them the minimum fourth-order and sixth-order formulas (MFT and MST).
Our approach can be generalized to higher-order formulas.}
An important application of Trotterization formulas is quantum dynamics simulation on digital quantum computers, where reducing the number of exponentials (i.e., quantum gates) is beneficial for suppressing errors accompanied by each gate operation.
Although Trotterization-based algorithms do not have optimal error scalings of quantum-singular-value-transformation-based ones~\cite{Low2017,Martyn2021,Watkins2022,Mizuta2023}, they have fewer overheads and are useful in near-term quantum computers. 
As demonstrated in a single-qubit model, the MFT's error is as tiny as Suzuki's fourth-order formula, even though the MFT involves fewer exponentials.
\hl{Also, the 9-exponential formula has less error per gate than Suzuki's formula, as demonstrated in a many-spin model.
These formulas would also be useful in MPS-based dynamics simulations, where fewer exponentials could speed up the simulation.
In addition, the fourth-order formulas can be used for estimating the errors of lower-order Trotterizations~\cite{Ikeda2023}.}

Our result is limited in two ways.
First, we assumed Eq.~\eqref{eq: time dependence} and excluded more general time dependence $A(t)=X(t)+Y(t)$.
Equation~\eqref{eq: time dependence} includes important applications like the adiabatic state preparation, but it would be desirable if one could generalize the assumption.
Second, our result is currently limited to the decomposition into two parts $X$ and $Y$.
In some physically relevant models, such as the quantum Ising model with external fields and the Heisenberg model, the Trotterization for the two operators is sufficient since those Hamiltonians can be viewed as a sum of two operator sets, each of which consists of commuting operators.
However, it would be useful if one could extend the MFT to more than two operators.

The error of the MFT was obtained as $\Lambda_5$ in Eq.~\eqref{eq:MFT}
consisting of time-independent and -dependent contributions.
For a quantum many-body local Hamiltonian on a lattice, both contributions are proportional to the volume since they consist of (nested) commutators between local Hamiltonian parts.
If we use some norm for the error operator as an error estimator, it will also be proportional to the volume.
However, one is usually interested in the evolution of a given vector $v$, and the error is evaluated as $\| \Lambda_5 v\|$.
Such an idea was proposed to improve the error estimates by Jahnke et al.~\cite{Jahnke2000} for time-independent cases and generalized by An et al.~\cite{An2021} to the time-dependent Suzuki formulas.
A similar idea could be used to estimate the error of the MFT.
Also, errors could be reduced by making the time step $\delta t$ adaptive~\cite{Zhao2022,Zhao2023}.
We leave further investigations of the time-dependent error $\Upsilon_5$ for future work.

\section*{Acknowledgements}
The authors thank Hongzheng Zhao and Kaoru Mizuta for helpful discussions.
T. N. I. was supported by JST PRESTO Grant No. JPMJPR2112 and by JSPS KAKENHI Grant No. JP21K13852. 
A.A. and S.S. wish to thank NTT Research for their financial and technical support. 

\bibliographystyle{plainnat}
\bibliography{references}

\onecolumn 
\appendix

\section{Magnus expansion and continuous BCH formula}\label{app:Magnus}
The Magnus expansion~\cite{Blanes2009} up to the fourth order in our notation~\eqref{eq:Magnus} reads
\begin{align}
\Omega_1 &= \int_{\mu-\delta t/2}^{\mu+\delta t/2}ds_1 A(s_1),\\
\Omega_2 &= \frac{1}{2}\int_{\mu-\delta t/2}^{\mu+\delta t/2}ds_1\int_{\mu-\delta t/2}^{s_1}ds_2 [A(s_1),A(s_2)],\\
\Omega_3 &= \frac{1}{6}\int_{\mu-\delta t/2}^{\mu+\delta t/2}ds_1\int_{\mu-\delta t/2}^{s_1}ds_2 \int_{\mu-\delta t/2}^{s_2}ds_3 ([A(s_1),[A(s_2),A(s_3)]]+[[A(s_1),A(s_2)],A(s_3)]),\\
\Omega_4 &= \frac{1}{12}\int_{\mu-\delta t/2}^{\mu+\delta t/2}ds_1\int_{\mu-\delta t/2}^{s_1}ds_2 \int_{\mu-\delta t/2}^{s_2}ds_3 \int_{\mu-\delta t/2}^{s_3} ds_4 ([[[A(s_1),A(s_2)],A(s_3)],A(s_4)]\notag\\
&\qquad\qquad+[A(s_1),[[A(s_2),A(s_3)],A(s_4)]]]+[A(s_1),[A(s_2),[A(s_3),A(s_4)]]]\notag\\
&\qquad\qquad\qquad+[A(s_2),[A(s_3),[A(s_4),A(s_1)]]]).
\end{align}
For $A(t)=u_1(t)Z_1+u_2(t)Z_2$, $\Omega_n$ reduce to
\begin{align}
&\Omega_1 = \omega_1 Z_1 + \omega_2 Z_2,\
\Omega_2 = \frac{1}{2}(\omega_{21}-\omega_{12})[Z_1,Z_2],\
\Omega_3 = \sum_{i=1}^2 \beta_{i12}[Z_i,[Z_1,Z_2]],\notag\\
&\Omega_4 = \sum_{i=1}^2\sum_{j=1}^2 \beta_{ij12}[Z_i,[Z_j,[Z_1,Z_2]]],
\end{align}
with
\begin{align}
\beta_i &= \omega_i \quad (i=1,2),\label{eq:beta_i}\\
\beta_{12} &= \frac{1}{2}(\omega_{21}-\omega_{12}),\label{eq:beta_12}\\
\beta_{i12} &= \frac{1}{6}(\omega_{21i}-\omega_{12i}-\omega_{i21}+\omega_{i12}),\\
\beta_{ij12} &=-\frac{1}{12}(\omega_{ij21}-\omega_{ij12}+\omega_{j12i}-\omega_{j21i}
+\omega_{21ji}-\omega_{12ji}+\omega_{1ji2}-\omega_{2ji1}),
\end{align}
where we introduced
\begin{align}
\omega_{i_1\cdots i_S} \equiv \int_{\mu-\delta t/2}^{\mu+\delta t/2}dt_S \int_{\mu-\delta t/2}^{t_S}dt_{S-1}\dots \int_{\mu-\delta t/2}^{t_3}dt_2\int_{\mu-\delta t/2}^{t_2}dt_1 u_{i_S}(t_S)\dots u_{i_1}(t_1).
\end{align}

\section{Orthogonal polynomial expansions of continuous BCH formula}\label{app:poly}
As shown in Appendix~\ref{app:Magnus}, the continuous BCH formula is characterized by the coefficients $\beta_j$, $\beta_{12}$, $\beta_{i12}$, $\beta_{ij12}$, etc.
Here, considering $\delta t \to 0$, we derive their leading-order contributions in $\delta t$, using orthogonal polynomials.

We introduce the orthogonal polynomials $P_n(x)$ $(n=0,1,2,\dots)$ with $\mathrm{deg}P_n=n$ on domain $-1\le x\le 1$ that satisfy
\begin{align}
\int_{-1}^1 P_{m}(x)P_n(x) W(x) dx  = c_n \delta_{mn},
\end{align}
where $W(x)$ is the weight function and $c_n$ are real numbers.
For example, the Legendre polynomials correspond to $W(x)=1$ and $c_n=2/(2n+1)$.
Using this, we expand $u_i(t)$ as
\begin{align}\label{eq:poly_expansion}
u_i(\mu+s) = \frac{1}{\delta t} \sum_{n=1}^\infty u_i^{(n)}(\mu)P_{n-1}\left(\frac{s}{\delta t/2}\right),
\end{align}
which implies
\begin{align}\label{eq:polycoeff}
u_i^{(n)}(\mu) =\frac{\delta t}{c_{n-1}}\int_{-1}^1 dv u_i(\mu+v\delta t/2)P_{n-1}(v)W(v).
\end{align}
Using the Taylor expansion for $u_i$ in Eq.~\eqref{eq:polycoeff}, we obtain $u_i^{(n)}(\mu)=O(\delta t^n)$.
Based on Eq.~\eqref{eq:poly_expansion}, we can obtain the leading-order contributions of $\beta$'s in Appendix~\ref{app:Magnus}.

For example, we obtain, for the Legendre polynomials,
\begin{align}
\beta_i &= u_i^{(1)}=\frac{\delta t}{2}\int_{-1}^1 dv u_i(\mu+v\delta t/2),\\
\beta_{12} &= \frac{2}{3}( u_2^{(1)}u_1^{(2)}- u_1^{(1)}u_2^{(2)})+O(\delta t^5).
\end{align}
We note that $\beta_i$ consists of odd-order terms of $\delta t$
\begin{align}
\beta_i = u_i(\mu)\delta t+\frac{u_i''(\mu)}{24}\delta t^3+O(\delta t^5).
\end{align}
Thus, if $\beta_i=o(\delta t)$, then $\beta_i=O(\delta t^3)$.

\section{\hl{Derivation of the MST}}\label{app:higher}
Here we derive the minimum sixth-order Trotterization formula in Theorem~\ref{thm:MST}.
To begin with, we invoke the minimum sixth-order formula for time-independent operators~\cite{Yoshida1990}, having
\begin{align}
    e^{\beta_1 X +\beta_2Y}
    &= e^{a_1 \beta_1X}e^{b_1 \beta_2 Y} e^{a_2 \beta_1X}e^{b_2 \beta_2 Y} e^{a_3 \beta_1X}e^{b_3 \beta_2 Y} e^{a_4 \beta_1X}e^{b_4 \beta_2Y}\notag\\
    &\qquad \times e^{a_4 \beta_1 X} e^{b_3 \beta_2 Y} e^{a_3 \beta_1X}e^{b_2 \beta_2 Y} e^{a_2 \beta_1X}e^{b_1 \beta_2 Y}e^{a_1 \beta_1X}
    +O(\delta t^7),\label{eq:Yoshida}
\end{align}
where
\begin{align}\label{eq:abvals_Yoshida}
\begin{array}{ll}
a_1=0.39225680523878, & b_1=0.78451361047756, \\
a_2=0.5100434119184585, & b_2=0.235573213359357, \\
a_3=-0.4710533854097566, & b_3=-1.17767998417887, \\
a_4=\frac{1}{2}-\sum_{i=1}^3 a_i=0.0687531682525181, & b_4=1-2 \sum_{i=1}^3 b_i=1.31518632068391.
\end{array}
\end{align}
Note that this Yoshida formula is known to have a large error coefficient~\cite{Ostmeyer2023}.
To reduce the coefficient with adding more exponential, one can use the Blanes and Moan formula~\cite{Blanes2002}, like we derived the 9-exponential formula in the main text.

Now we consider its time-dependent generalization, for which 
\begin{align}
    S\left(\mu+\frac{\delta t}{2},\mu-\frac{\delta t}{2}\right)
    &=\exp\left( \beta_1 X +\beta_2 Y +\beta_{12}[X,Y]+\beta_{112}[X,[X,Y]]+\beta_{212}[Y,[X,Y]]\right.\notag\\
    &\qquad\qquad \left. +
    \beta_{1112}[X,[X,[X,Y]]]
+\beta_{2212}[Y,[Y,[X,Y]]] \right.\notag\\
&\qquad\qquad\qquad \left. +(\beta_{1212}+\beta_{2112})[X,[Y,[X,Y]]]\right) +O(\delta t^7),\label{eq:Magnus6}
\end{align}
according to the Magnus expansion, where $\beta_{12}=O(\delta t^3)$, and $\beta_{i12}$ and $\beta_{ij12}$ are $O(\delta t^5)$.
We aim to decorate the right-hand side of Eq.~\eqref{eq:Yoshida} so that the left-hand side coincides with Eq.~\eqref{eq:Magnus6}.

To reproduce the six commutator terms in the exponent of Eq.~\eqref{eq:Magnus6}, we introduce six unknown variables to consider
\begin{align}
\Phi&=
    e^{(a_1 \beta_1+u_4)X}e^{(b_1 \beta_2 +u_3)Y}e^{(a_2 \beta_1+u_2) X}e^{(b_2 \beta_2 +u_1 -z)Y} e^{(a_3 \beta_1-w) X}
 e^{(b_3 \beta_2 +z)Y}e^{(a_4 \beta_1 +w) X}e^{b_4 \beta_2 Y}\notag\\
 &\qquad\quad \times e^{(a_4 \beta_1+w) X}e^{(b_3 \beta_2 +z)Y}e^{(a_3 \beta_1 -w)X} e^{(b_2 \beta_2-u_1-z) Y}\times e^{(a_2 \beta_1-u_2) X}e^{(b_1 \beta_2 -u_3)Y}e^{(a_1 \beta_1-u_4)X},\label{eq:Phidef}
\end{align}
where we impose as working hypotheses that $u_1,\dots,u_4$ are $O(\delta t^2)$, and $w$ and $z$ are $O(\delta t^3)$.
Applying BCH-type formulas repeatedly and using Eq.~\eqref{eq:Yoshida}, we obtain
\begin{align}
    \Phi
    &=\exp\left( \beta_1 X +\beta_2 Y +c_{12}[X,Y]+c_{112}[X,[X,Y]]+c_{212}[Y,[X,Y]]\right.\notag\\
    &\qquad \left. +
    c_{1112}[X,[X,[X,Y]]]
+c_{2212}[Y,[Y,[X,Y]]]
+c_{1212}[X,[Y,[X,Y]]]
\right)
+O(\delta t^7),\label{eq:PhiBCH}
\end{align}
where $c_{12},c_{112},\dots,c_{1212}$ are second-order polynomials in terms of the unknowns $u_1,\dots,u_4,w,$ and $z$.
Using Eq.~\eqref{eq:abvals_Yoshida}, these polynomials are given by
\begin{align}
c_{12} &=0.804600434314477 u_1 \beta _1-0.21548638952244 u_3 \beta _1-0.56902722095512 u_2 \beta _2+ u_4 \beta _2,\\
c_{112}&=-0.28451361047756 \beta _2 u_2^2+0.804600434314477 u_1 \beta _1 u_2-0.56902722095512 u_4 \beta _2 u_2 \notag\\
&\qquad -0.157118466580002 z \beta _1^2+0.804600434314477 u_1 u_4 \beta
   _1-0.21548638952244 u_3 u_4 \beta _1\notag\\
   &\qquad\qquad +0.5 u_4^2 \beta _2 -0.161938460199746 w \beta _1 \beta _2,\\
c_{212}&=0.402300217157238 \beta _1 u_1^2+0.804600434314477 u_3 \beta _1 u_1-0.161938460199745 w \beta _2^2\notag\\
&\qquad -0.10774319476122 u_3^2 \beta _1-0.56902722095512 u_2 u_3 \beta _2-0.489977318150775
   z \beta _1 \beta _2,\\
c_{1112} &=
-0.0118215295615413 u_1 \beta _1^3+0.00856168382290096 u_3 \beta _1^3+0.0562690326323137 u_2 \beta _2 \beta _1^2,\\
c_{2212}&=0.0160325321433039 u_2 \beta _2^3+0.0641595078732893 u_1 \beta _1 \beta _2^2+0.065376134206464 u_3 \beta _1 \beta _2^2,\\
c_{1212}&=
0.0115567664079044 u_1 \beta _2 \beta _1^2+0.0538195677848599 u_3 \beta _2 \beta _1^2+0.112538065264628 u_2 \beta _2^2 \beta _1.
\end{align}

Thus our problem is reduced to solving the following set of six equations,
\begin{align}
    &c_{12}=\beta_{12},\quad c_{1112} = \beta_{1112}, \quad c_{2212} = \beta_{2212}, \quad c_{1212} = \beta_{1212} + \beta_{2112},\label{eq:ceq1}\\
    & c_{112} = \beta_{112},\quad c_{212} = \beta_{212},\label{eq:ceq2}
\end{align}
for $u_1,\dots,u_4,w,$ and $z$.
Due to the linearity and independence, Eqs.~\eqref{eq:ceq1} can be uniquely solved for $u_1,\dots,u_4$ under our assumptions $\beta_1=\Theta(\delta t)$ and $\beta_2=\Theta(\delta t)$.
Then, substituting these solutions to Eqs.~\eqref{eq:ceq2}, we have two linear and independent equations for $w$ and $z$, which can also be uniquely solved.
Thus we have proved that $\Phi$ with these solutions coincides with the exact propagator $S(\mu-\delta t/2,\mu+\delta t/2)$ up to the sixth order of $\delta t$.

\section{\hl{Spectral norm}}\label{app:spec}
In the main text, the Frobenius norm in the error definition~\eqref{eq:error_def}.
To emphasize this norm choice, we rewrite Eq.~\eqref{eq:error_def} as 
\begin{align}
\epsilon_F =\left\| S\left(\mu+\frac{\delta t}{2},\mu-\frac{\delta t}{2}\right) - T \right\|_F,\label{eq:error_def_app}
\end{align}
where $\| A \| = \sqrt{\mathrm{tr}(A^\dag A)}$.
Note that $2\epsilon_F$ represents the trace distance of $S$ and $T$.
We can also consider the spectral norm 
\begin{align}
\epsilon_S =\left\| S\left(\mu+\frac{\delta t}{2},\mu-\frac{\delta t}{2}\right) - T \right\|_S,\label{eq:error_def_spec}
\end{align}
where $\| A\|_S$ coincides with the largest singular value of $A$.
We confirm that both choices give qualitatively similar results in the benchmark calculation presented in Fig.~\ref{fig:dt-dep}.
We illustrate, in Fig.~\ref{fig:ratio}, the error ratio obtained for the same simulation,
where we observe that the different choice results only in a constant factor independent of the algorithm and time step.

\begin{figure}
\begin{center}
  \includegraphics[width=10cm]{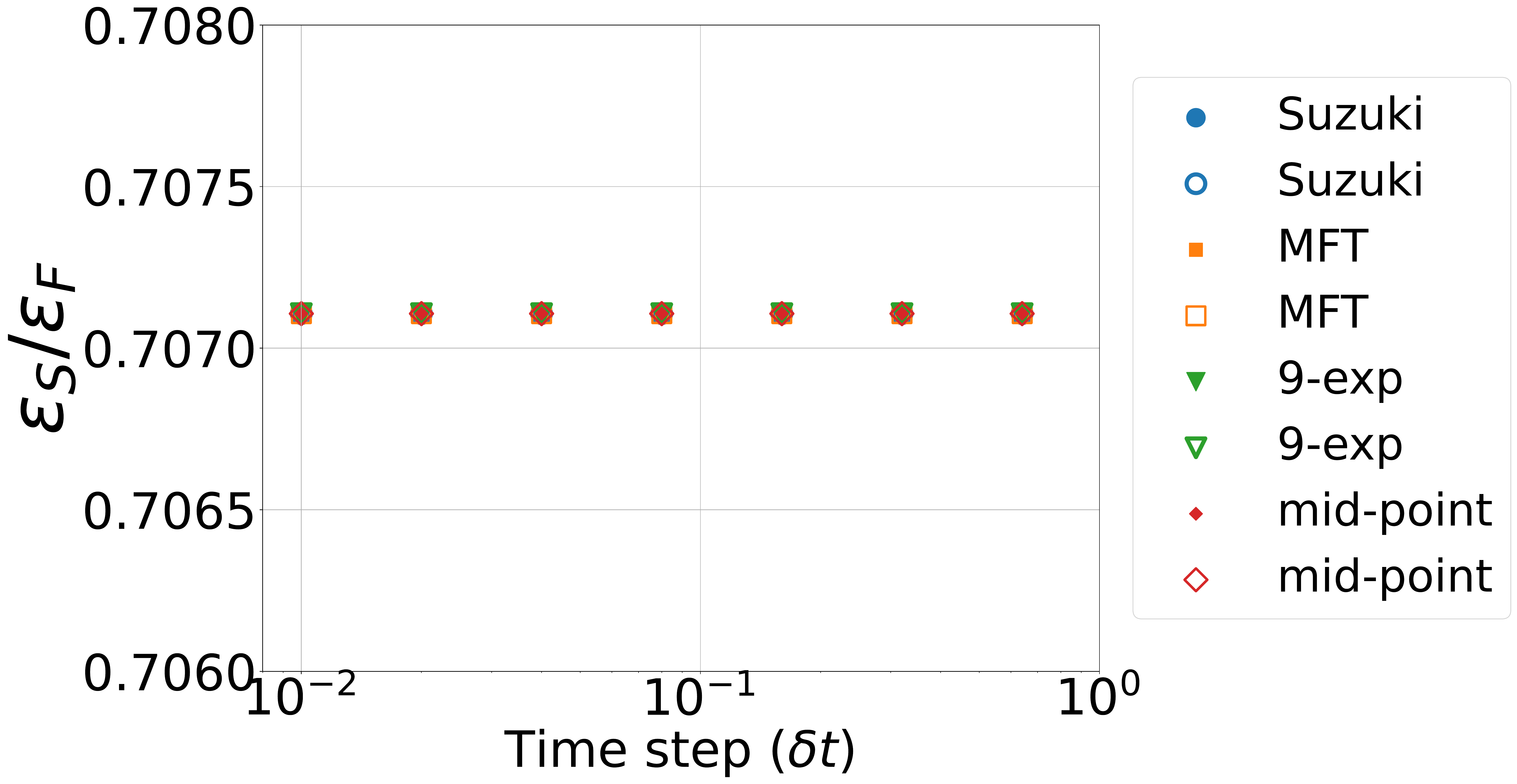}
\caption{Error ratio between those evaluated by the spectral and Frobenius norm for Trotterization formulas.
The parameters and legends are all shared by those in Fig.~\ref{fig:dt-dep}.}
\label{fig:ratio}
\end{center}
\end{figure}

\section{Details on $\Gamma_5$ and $\Gamma_5'$}\label{app:error_detail}
Here we supplement more details on the time-independent errors $\Gamma_5$ and $\Gamma_5'$.
According to Ref.~\cite{Omelyan2002}, each symmetric Trotterization formula has the following relation,
\begin{align}
\mathrm{e}^{(A+B) h+C_1 h+C_3 h^3+C_5 h^5+\cdots}=\mathrm{e}^{A a_1 h} \mathrm{e}^{B b_1 h} \cdots \mathrm{e}^{B b_q h} \mathrm{e}^{A a_{q+1} h},
\end{align}
where $h$ is the time step and
\begin{align}
    \begin{aligned}
C_1= & (\nu-1) A+(\sigma-1) B \\
C_3= & \alpha[A,[A, B]]+\beta[B,[A, B]] \\
C_5= & \gamma_1[A,[A,[A,[A, B]]]]+\gamma_2[A,[A,[B,[A, B]]]]+\gamma_3[B,[A,[A,[A, B]]]] \\
& +\gamma_4[B,[B,[B,[A, B]]]]+\gamma_5[B,[B,[A,[A, B]]]]+\gamma_6[A,[B,[B,[A, B]]]].
\end{aligned}
\end{align}
Each fourth-order formula is obtained by choosing $a_i$'s and $b_i$'s so that $C_1=C_3=0$, in which case the Trotter error is proportional to $C_5$ in the leading order:
\begin{align}
\mathrm{e}^{(A+B)h}-\mathrm{e}^{A a_1 h} \mathrm{e}^{B b_1 h} \cdots \mathrm{e}^{B b_q h} \mathrm{e}^{A a_{q+1} h}=
h^5 C_5 +O(h^7).
\end{align}
We have $q=3$ for the Forest-Ruth-Suzuki formula and $q=4$ for Omelyan's Forest-Ruth formula, for which we have different values of $\gamma_1,\dots,\gamma_6$.
Omelyan's Forest-Ruth formula has more degrees of freedom that are fine-tuned to decrease an error quantifier $\sum_{i=1}^6 |\gamma_i|^6$~\cite{Omelyan2002}.
The time-independent contributions $\Gamma_5$ and $\Gamma_5'$ are given, in the leading order of $\delta t$, by $h^5C_5$ under the substitution of $Ah\to \beta_1 X$ and $Bh\to \beta_2 Y$ (or the other assignment $Ah\to \beta_2 Y$ and $Bh\to \beta_1 X$).

Now we explain why the filled and open symbols for the MFT in Fig.~\ref{fig:mu-dep} deviate significantly in the large $\mu$ region.
We recall that
\begin{align}
    &\beta_1= \delta t,\ F=\sigma_x,\ \beta_2= \mu\delta t,\ G=\sigma_z, \quad (\text{for}\ \blacksquare),\label{eq:fillsq}\\
    &\beta_1= \mu \delta t,\ F=\sigma_z,\ \beta_2= \delta t,\ G=\sigma_x, \quad (\text{for}\ \square),\label{eq:opensq}
\end{align}
where we used $\int_{\mu-\delta t/2}^{\mu+\delta t/2} f(t)dt=\delta t$ and $\int_{\mu-\delta t/2}^{\mu+\delta t/2} g(t)dt=\mu\delta t$.
In Fig.~\ref{fig:mu-dep}, we set $\delta t=0.1/\sqrt{1+\mu^2}$ so $\beta_1\sim 0.1/\mu$ and $\beta_2\sim 0.1$ for $\blacksquare$ whereas $\beta_1\sim 0.1$ and $\beta_2\sim 0.1/\mu$ for $\square$ when $\mu\gg1$.
This asymmetry gives rise to the difference in the error of the MFTs.
As discussed in the main text, the dominant contribution of the error comes from the time-independent part $\Gamma_5$, which is $h^5 C_5$ in the present notation.
Therefore, the errors in the leading order are obtained by substituting $Ah\to \beta_1 (-iF) $ and $Bh\to \beta_2 (-iG)$ as
\begin{align}
    \Gamma_5 \sim
    \begin{cases}
        (-i)^5 \gamma_4 \frac{0.1^5}{\mu}[\sigma_z,[\sigma_z,[\sigma_z,[\sigma_x,\sigma_z]]]] & \text{for}\ \blacksquare,\ \mu\gg1\\
        (-i)^5 \gamma_1 \frac{0.1^5}{\mu}[\sigma_z,[\sigma_z,[\sigma_z,[\sigma_x,\sigma_z]]]] & \text{for}\ \square,\ \mu\gg1
    \end{cases},
\end{align}
where we used Eqs.~\eqref{eq:fillsq} and \eqref{eq:opensq} and the asymptotic behaviors of $\beta_1$ and $\beta_2$ in $\mu\gg1$.
Note that we obtain the observed $\mu^{-1}$ scaling for both cases.
Interestingly, the difference between these cases comes from the imbalance of $\gamma_4$ and $\gamma_1$, which are~\cite{Omelyan2002}
\begin{align}
    \gamma_1 = -0.0004138,\quad
    \gamma_4 = +0.0046844,
\end{align}
and thus differ by a factor of 10.
This is why we observed a magnitude difference between $\blacksquare$ and $\square$ in Fig.~\ref{fig:mu-dep}.

For generic values of $\mu$, $\mu$-dependence of $\delta t$ and more than one coefficients $\gamma_1,\cdots,\gamma_6$ come into play.
This may cause a peak near $\mu\sim1$ in Fig.~\ref{fig:mu-dep}, but we do not go into more detail since this should depend on the model and parameters.


\end{document}